\def\singlespace{\def\baselinestretch{1}\@normalsize}
\def\singlespace{\def\baselinestretch{1}\@normalsize}
\newcommand{\bfm}[1]{\ensuremath{\mathbf{#1}}}
\def\ba{\bfm a}   \def\bA{\bfm A}  
\def\bb{\bfm b}     
   \def\bD{\bfm D}  
     \def\EE{\mathbb{E}}
   \def\bH{\bfm H}  
   \def\bJ{\bfm J}  
   \def\bL{\bfm L}
   \def\bP{\bfm P}
   \def\bU{\bfm U}
\def\bx{\bfm x}   \def\bX{\bfm X}
\def\bzero{\bfm 0}
\newcommand{\bfsym}[1]{\ensuremath{\boldsymbol{#1}}}
\def\balpha{\bfsym \alpha}
\def\bbeta{\bfsym \beta}
\def\bgamma{\bfsym \gamma}
\def\btheta{\bfsym {\theta}}           
             \def\bSigma{\bfsym \Sigma}
\def\blambda {\bfsym {\lambda}}        
          \def\bOmega {\bfsym {\Omega}}
		        \def\bPsi{\bfsym {\Psi}}
\DeclareMathOperator{\Var}{Var}
\def\today{\ifcase\month\or
	January\or February\or March\or April\or May\or June\or
	July\or August\or September\or October\or November\or December\fi
	\space\number\day, \number\year}
\def\Cov{\mbox{Cov}}
\newdimen\biblioindent    \biblioindent=30pt
\newcommand{\beq}{\begin{equation}\tag{1}}
\newcommand{\eeq}{\end{equation}}
\newcommand{\beqn}{\begin{eqnarray}}
\newcommand{\eeqn}{\end{eqnarray}}
\newcommand{\beqnn}{\begin{eqnarray*}}
\newcommand{\eeqnn}{\end{eqnarray*}}
\def\aVar{{\rm aVar}}
\def\bPsi{\boldsymbol{\Psi}}
\def\bX{\boldsymbol{X}}
\def\bx{\boldsymbol{x}}
\newcounter{CondCounter}
\def\Tsc{\mathcal{T}}
\def\Ssc{\mathcal{S}}
\def\trans{^{\scriptscriptstyle \sf T}}
\def\X{\boldsymbol{X}}
\def\x{\boldsymbol{x}}
\newtheorem{lemma}{Lemma}
\newtheorem{assumption}{Assumption}
\newtheorem{theorem}{Theorem}
\newtheorem{remark}{Remark}
\def\bbeta{\boldsymbol{\beta}}
\def\balpha{\boldsymbol{\alpha}}
\def\bgamma{\boldsymbol{\gamma}}
\def\btheta{\boldsymbol{\theta}}
\def\bphi{\boldsymbol{\phi}}
\def\bSigma{\boldsymbol{\Sigma}}
\def\bSigmahat{\widehat{\bSigma}}
\def\bPhi{\boldsymbol{\Phi}}
\def\bPsi{\boldsymbol{\Psi}}
\def\balphabar{\bar{\balpha}}
\def\bS{\boldsymbol{S}}
\def\muhat{\widehat{\mu}}
\def\subPS{_{\sf \scriptscriptstyle PS}}
\def\subOR{_{\sf \scriptscriptstyle OR}}
\def\subDR{_{\sf \scriptscriptstyle DR}}
\def\subPAD{_{\sf \scriptscriptstyle PAD}}
\def\subOAD{_{\sf \scriptscriptstyle OAD}}
\def\OAD{{\sf \scriptscriptstyle OAD}}
\def\subaug{_{\sf \scriptscriptstyle aug}}
\begin{document}

\title{Improve Efficiency of Doubly Robust Estimator when Propensity Score is Misspecified}
\author{Liangbo Lyu$^*$ \ and \ Molei Liu$^{\dag}$}

\footnotetext[1]{Liangbo Lv is an undergraduate student from the School of Statistics, Renmin University of China.}
\footnotetext[2]{Molei Liu is an assistant professor at Columbia University Mailman School of Public Health.}

\date{}

\maketitle
\begin{abstract}
\noindent Doubly robust (DR) estimation is a crucial technique in causal inference and missing data problems. We propose a novel $\bP$ropensity score $\bA$ugmented $\bD$oubly robust (PAD) estimator to enhance the commonly used DR estimator for average treatment effect on the treated (ATT), or equivalently, the mean of the outcome under covariate shift. Our proposed estimator attains a lower asymptotic variance than the conventional DR estimator when the propensity score (PS) model is misspecified and the outcome regression (OR) model is correct while maintaining the double robustness property that it is valid when either the PS or OR model is correct. These are realized by introducing some properly calibrated adjustment covariates to linearly augment the PS model and solving a restricted weighted least square (RWLS) problem to minimize the variance of the augmented estimator. Both the asymptotic analysis and simulation studies demonstrate that PAD can significantly reduce the estimation variance compared to the standard DR estimator when the PS model is wrong and the OR is correct, and maintain close performance to DR when the PS model is correct. We further applied our method to study the effects of eligibility for 401(k) plan on the improvement of net total financial assets using data from the Survey of Income and Program Participation of 1991. 
\end{abstract}

\noindent{\bf Keywords}: Causal inference; Covariate shift correction; Propensity score; Outcome regression; Double robustness; Intrinsic efficiency.

\baselineskip=17pt

\newpage

\section{Introduction}

\subsection{Background}\label{sec:intro:back}

Doubly robust (DR) estimation has attracted extensive interest in the literature on semiparametric theory and causal inference and is frequently used in biomedical science, economics, and policy science studies. It incorporates two nuisance models, a propensity score (PS) model, and an outcome regression (OR) model to characterize distributions of the exposure and outcome against the adjustment covariates respectively, and draws valid inferences when either one of them is correctly specified. It has been well-established that when both the PS and OR models are correct, the DR estimator is semiparametric efficient and its asymptotic variance does not really depend on the estimating equations for the nuisance models \citep[e.g.]{tsiatis2006semiparametric}. Nevertheless, there still remains an intriguing question on how to improve the asymptotic efficiency of the DR estimator when one nuisance model is misspecified. For the scenario with correct PS and wrong OR models, there is a track of work \citep[e.g.]{cao2009improving,tan2010bounded} proposing the so-called intrinsic efficient estimator that will be reviewed in Section \ref{sec:intro:lit}. This type of estimator preserves the double robustness property and achieves improved efficiency over the standard DR estimator when the PS model is correct and the OR is wrong. Interestingly, we notice that the dual problem of this, i.e., improving the (intrinsic) efficiency of the DR estimator under wrong PS and correct OR, is supposed to be equally important but has not been handled yet due to certain technical reasons that will be discussed later. Aimed in this paper, filling this methodological blank can effectively complement the existing tools for DR and semiparametric inference.

\subsection{Problem Setup}\label{sec:intro:set}

To make our idea easier to understand, we focus on a specific missing data problem: transfer estimation of the outcome's mean in the presence of covariate shift \citep[e.g.]{huang2007correcting}. This is also equivalent to estimating the average treatment effect on the treated (ATT) \citep[e.g.]{hahn2004functional} in the context of causal inference and matching-adjusted indirect comparison frequently conducted in biomedical studies \citep{signorovitch2010comparative}. Our method could be generalized to other settings such as estimating the average treatment effect (ATE) and transfer learning of regression models \citep{liu2020doubly}.

Suppose there are $n$ labeled samples with observed outcome $Y$ and covariates $\X\in\mathbb{R}^d$, and $N$ unlabeled samples only observed on $\X$. Let $\Delta=1$ indicate that the sample is labeled and $\Delta=0$ otherwise. The labeled observations $(Y_i,\X_i)$ are collected from a source population $\Ssc$ with $\Delta_i=1$ for $i=1,2,\ldots,n$. Assume $(Y_i,\X_i)\sim p_{\Ssc}(\x)q(y|\x)$ for $i=1,2,\ldots,n$ where $p_{\Ssc}(\x)$ and $q(y|\x)$ represent the density of $\X$ on $\Ssc$ and the conditional density of $Y$ given $\X=\x$ respectively. Meanwhile, there are unlabeled samples from a target population $\Tsc$ indicated by $\Delta_i=0$ and only observed on covariates $\X_i$ for $i=n+1,\ldots,N+n$. Assume that on $\Tsc$, $(Y_i,\X_i)\sim p_{\Tsc}(\x)q(y|\x)$ with $p_{\Tsc}(\x)$ representing the density of $\X$ on $\Tsc$ and the distribution of $Y\mid\X$ remaining to be the same as that on $\Ssc$. Our goal is to estimate $\mu_0=\EE_{\Tsc}Y$, the marginal mean of $Y$ on $\Tsc$. In the absence of observed $Y$ on the target samples, two simple strategies to estimate $\mu_0$ are introduced below. 
\begin{enumerate}
    \item[(PS)] Define the propensity score (PS) or density ratio between the two populations as $r_0(\x)={p_{\Tsc}(\x)}/{p_{\Ssc}(\x)}$. Estimate $r_0(\x)$ with some $\widehat r(\x)$ and average the observed $Y_i$ weighted by $\widehat r(\X_i)$ over $i=1,2,\ldots,n$ from $\Ssc$.
    
    \item[(OR)] Define the outcome regression (OR) or imputation model for $Y$ as $m_0(\x)=\mathbb{E}[Y\mid\X=\x]$. Estimate $m_0(\x)$ with some $\widehat m(\x)$ obtained using the labeled samples and average $\widehat m(\X_i)$ over $i=n+1,\ldots,n+N$ from $\Tsc$.
\end{enumerate}
Both the PS and OR strategies are built upon the assumption that the distribution of $Y\mid\X$ is the same between $\Ssc$ and $\Tsc$ so the knowledge of $Y$ on $\Ssc$ is transferable to $\Tsc$. This is in the same spirit as the \emph{no unmeasured confounding} assumption in the context of causal inference.

\subsection{Related literature}\label{sec:intro:lit}

Our work is based on the doubly robust (DR) inference framework that has been frequently studied and applied in the past years \citep[e.g.]{robins1994estimation,bang2005doubly,kang2007demystifying,tan2010bounded,vermeulen2015bias}. It combines the PS and OR models introduced in Section \ref{sec:intro:set} to construct an estimator that is valid when at least one of the two nuisance models are correct and, thus, regarded as a more robust statistical inference procedure than the simple PS and OR strategies. Early work in DR inference \citep[e.g.]{bang2005doubly,kang2007demystifying} mainly used \emph{working} low-dimensional parametric regression to construct the PS and OR models. Recent progress has been made to accommodate the use of high-dimensional regression or complex machine learning methods in estimating the nuisance models \citep[e.g.]{chernozhukov2018double,tan2020model}, which is less prone to model misspecification. We focus the scope of this paper on the low-dimensional parametric setting that is technically less involved but more user-friendly and less sensitive to over-fitting in practice. It is also possible and valuable to generalize our work to the settings of high-dimensional parametric \citep[e.g.]{tan2020model,dukes2020inference} or semi-non-parametric \citep{liu2020doubly} nuisance models, in which model misspecification is still an important concern.

There has risen great interest in studying and improving the asymptotic efficiency of the DR estimator. One track of literature studied the local efficiency of the DR estimator, i.e., if it is semiparametric efficient when both the PS and OR models are known or correctly specified. While it was shown that the standard DR estimator for the ATE \citep{robins1994estimation} achieves such local efficiency \citep{hahn1998role,tsiatis2006semiparametric}. This result cannot be directly applied to the ATT estimator because unlike ATE, the PS model of ATT is informative (or non-ancillary) \citep{hahn1998role,hahn2004functional}. \cite{shu2018improved} further studied this subtle issue and proposed locally efficient DR estimators for ATT based on its influence function. 

Meanwhile, another track of literature focuses on improving the efficiency of the DR estimator in the presence of correct PS and potentially wrong OR models and, thus, is more relevant to our work that also aims at automatic variance reduction under model misspecification. A class of \emph{intrinsic efficient} DR estimator has been proposed for the efficient estimation of ATE \citep{cao2009improving,tan2010bounded}, ATT \citep{shu2018improved}, casual regression model \citep{rotnitzky2012improved}, longitudinal data \citep{han2016intrinsic}, individual treatment rule \citep{pan2021improved}, etc. This type of estimator is (i) valid when either nuisance model is correct; (ii) equivalent with the standard DR estimator when both models are correct; and (iii) of the minimum variance under correct PS and wrong OR, among all the DR estimators with the same parametric specification of the OR model, and, consequently, more efficient than the standard DR estimator. In addition, it was shown that including more prognostic covariates or auxiliary basis in the PS model can always help to reduce the variance of the ATE estimator \citep{hahn2004functional,tsiatis2006semiparametric}. Motivated by this, \cite{cheng2020estimating} proposed a double-index PS estimator for ATE that smooths the treatment over the parametric PS and OR models to achieve the DR property as well as variance reduction under correct PS and wrong OR. Nevertheless, such a strategy may also incur over-fitting issues and cause poor performance in finite or small sample studies \citep{gronsbell2022efficient}.

Although the correct PS and wrong OR setting has been frequently studied, there is still a paucity of solutions to its dual problem, i.e., enhancing the DR estimator under the wrong PS and correct OR. Some early work like \cite{kang2007demystifying} and \cite{cao2009improving} argued that the simple OR strategy is an ideal choice when one knows the PS model is wrong since it is free of PS weighting that may decrease the effective sample size. However, since there are no perfect ways to examine model correctness without any additional assumptions, this strategy can never be as robust as the DR estimator to misspecification of the OR model.

We also notice a large body of work in statistical learning and causal inference that aims at leveraging some auxiliary data or information to boost the asymptotic efficiency of certain estimators using the idea of augmentation. For example, \cite{kawakita2013semi}, \cite{chakrabortty2018efficient} and \cite{azriel2021semi} proposed different semi-supervised learning methods that improve estimation efficiency of the linear model leveraging large unlabeled data drawn from the same distribution as the labeled samples. Methods like \cite{chen2000unified} and \cite{yang2019combining} utilized external data with error-prone outcomes or covariates to construct control variate for variance reduction. These methods, as well as other examples, rely on some auxiliary data to construct estimators that always converge to zero and are asymptotically correlated with the target estimator. These zero estimators are then used to augment the target estimator properly for variance reduction. Our work also adapts the high-level idea of augmentation. But different from these methods, ours does not leverage any auxiliary samples or knowledge and additionally cares about the need of prioritizing validity (double robustness) over statistical power. Consequently, the asymptotic behavior of our augmented estimator actually varies according to the correctness of the nuisance models and is more technically involved in to study.

\subsection{Our contribution}

To estimate $\mu_0$ introduced in Section \ref{sec:intro:set} efficiently, we propose a novel $\bP$ropensity score $\bA$ugmented $\bD$oubly robust (PAD) estimation method that enhances the standard DR estimator of $\mu_0$ by linearly augmenting the PS model with some functions of $\X$. Both the augmentation functions and their linear coefficients are wisely and carefully constructed such that the augmentation term always reduces the variance of the DR estimator if the PS is wrong and the OR is correct while it automatically converges to zero if the PS is correct, in order to avoid bias and ensure double robustness. Also, when both models are correct, our PAD estimator becomes asymptotically equivalent to the standard DR estimator. To our best knowledge, the proposed estimator is the first one to simultaneously have the DR property and a smaller variance than the standard DR estimator under wrong PS and correct OR models. Thus, our work serves as an important complement to existing DR inference approaches, especially to the intrinsically efficient DR estimators proposed to work for the setting with correct PS and wrong OR \citep[e.g.]{cao2009improving,tan2010bounded}.

\section{Method}\label{sec:method}

\subsection{Doubly robust estimator}

As a prerequisite of our proposal, we first introduce the standard DR estimator for $\mu_0$ under the setup described in Section \ref{sec:intro:set}, which has been studied for years \citep[e.g.]{hahn1998role,hahn2004functional,shu2018improved}. Following a common strategy \citep[e.g.]{bang2005doubly,shu2018improved,liu2020doubly}, we form the PS and OR models as $r(\x)=\exp(\x\trans\bgamma)$ and $m(\x)=g(\bx\trans\balpha)$ where $\bgamma$ and $\balpha$ are model coefficients and $g(\cdot)$ is a known and differentiable link function. We say that the PS (or OR) model is correct if there exists $\bgamma_0$ (or $\balpha_0$) such that the true $r_0(\x)=\exp(\x\trans\bgamma_0)$ (or $m_0(\x)=g(\bx\trans\balpha_0)$). Denote the empirical mean operator on $\Ssc$ and $\Tsc$ as $\widehat\EE_{\Ssc}$ and $\widehat\EE_{\Tsc}$ such that 
\[
\widehat\EE_{\Ssc}a(\X,Y)=n^{-1}\sum_{i=1}^na(\bX_i,Y_i),\quad\widehat\EE_{\Tsc}a(\X,Y)=N^{-1}\sum_{i=n+1}^{n+N}a(\bX_i,Y_i)
\]
for any function $a(\cdot)$. Suppose the two nuisance estimators $\widehat\bgamma$ and $\widehat\balpha$ are obtained respectively by solving the estimating equations:
\begin{equation}
\widehat\EE_{\Ssc}\X\exp(\X\trans\bgamma)=\widehat\EE_{\Tsc}\X,\quad\widehat\EE_{\Ssc}\X\{Y-g(\X\trans\balpha)\}=\bzero.
\label{equ:2.1.1}
\end{equation}
The estimating equations for $\bgamma$ in \eqref{equ:2.1.1} is usually referred as covariate balancing \citep{imai2014covariate,zhao2017entropy}, and those for $\balpha$ correspond to the ordinary least square regression when $g(a)=a$ and the logistic regression when $Y$ is binary and $g(a)={\rm expit}(a)=e^a/(1+e^a)$. Note that one can use alternative estimation procedures to obtain $\bgamma$ and $\balpha$, e.g., running a logistic regression on $\Delta$ against $\X$ to estimate $\bgamma$, and our proposed method could naturally adapt to different choices on this.

Based on $\widehat\bgamma$ and $\widehat\balpha$, the PS and OR estimators introduced in Section \ref{sec:intro:set} can be specified as $\muhat\subPS=\widehat\EE_{\Ssc}Y\exp(\X\trans\widehat\bgamma)$ and $\muhat\subOR=\widehat\EE_{\Tsc}g(\X\trans\widehat\balpha)$ respectively. Then the standard DR estimator is constructed by augmenting one of them with another nuisance model:
\begin{equation}
\muhat\subDR=\widehat\EE_{\Ssc}\{Y-g(\X\trans\widehat\balpha)\}\exp(\X\trans\widehat\bgamma)+\widehat\EE_{\Tsc}g(\X\trans\widehat\balpha).
\label{equ:2.1.3}
\end{equation}
When the PS model is correct and $\widehat\bgamma$ converges to $\bgamma_0$,  $\widehat\EE_{\Tsc}g(\X\trans\widehat\balpha)-\widehat\EE_{\Ssc}g(\X\trans\widehat\balpha)\exp(\X\trans\widehat\bgamma)$ converges to zero and the remainder term $\widehat\EE_{\Ssc}Y\exp(\X\trans\widehat\bgamma)$ is exactly the PS estimator converging to $\mu_0$. Similarly, when OR is correct, we can show that $\widehat\EE_{\Ssc}\{Y-g(\X\trans\widehat\balpha)\}\exp(\X\trans\widehat\bgamma)$ converges to zero and $\widehat\EE_{\Tsc}g(\X\trans\widehat\balpha)$ converges to $\mu_0$. Thus $\muhat\subDR$ is doubly robust in the sense that it is consistent when either the PS or OR model is correctly and consistently estimated.

\subsection{Expansion of DR estimator under correct OR model}\label{sec:method:expand}

To help the readers understand our method more intuitively, we now heuristically derive and analyze the asymptotic expansion of $\muhat\subDR$ when the OR model is correctly specified. Suppose that $\widehat\bgamma$ and $\widehat\balpha$ converge to some $\bar\bgamma$ and $\bar\balpha$ defined as the solutions to the population-level estimating equations $\EE_{\Ssc}\X\exp(\X\trans\bgamma)=\EE_{\Tsc}\X$ and $\EE_{\Ssc}\X\{Y-g(\X\trans\balpha)\}=\bzero$, respectively. Let $\widehat r(\x)=\exp(\X\trans\widehat\bgamma)$, $\bar r(\x)=\exp(\X\trans\bar\bgamma)$, and $\bS(\balpha)=\bS(Y,\X,\balpha)=\X\{Y-g(\X\trans\balpha)\}$. Suppose that the OR model is correct, i.e., $m_0(\x)=g(\X\trans\balpha_0)$ and $\balpha_0=\bar\balpha$, and $n^{1/2}(\widehat\balpha-\bar\balpha,\widehat\bgamma-\bar\bgamma)$ is asymptotically normal with mean zero following the standard M-estimation theory \citep{van2000asymptotic}. Then we have
\[
\widehat\EE_{\Ssc}\{Y-g(\X\trans\widehat\balpha)\}\{\widehat r(\X)-\bar r(\X)\}=o_p(n^{-1/2})
\]
due to Neyman orthogonality \citep{neyman1959optimal}, which, as will be strictly proved in Section \ref{sec:thm}, implies that $\muhat\subDR$ defined in (\ref{equ:2.1.3}) is asymptotically equivalent with 
\begin{align*}
   \widetilde\mu\subDR=&\widehat\EE_{\Ssc}\{Y-g(\X\trans\bar\balpha)\}\bar r(\X)+\widehat\EE_{\Tsc}g(\X\trans\bar\balpha)\\
   &+\left[\widehat\EE_{\Ssc}\{g(\X\trans\bar\balpha)-g(\X\trans\widehat\balpha)\}\bar r(\X)+\widehat\EE_{\Tsc}\{g(\X\trans\widehat\balpha)-g(\X\trans\bar\balpha)\}\right]\\
   \approx&\widehat\EE_{\Ssc}\{Y-g(\X\trans\bar\balpha)\}\bar r(\X)+\widehat\EE_{\Tsc}g(\X\trans\bar\balpha)+\bL\trans\widehat\EE_{\Ssc}\X\{Y-g(\X\trans\bar\balpha)\},
\end{align*}
where $\bL=-\bar\bH^{-1}\left\{\EE_{\Ssc}\X \dot g(\X\trans\balphabar)\bar r(\X)-\EE_{\Tsc}\X \dot g(\X\trans\balphabar)\right\}$, $\bar\bH=\EE_{\Ssc}\X\X\trans\dot g(\X\trans\balphabar)$, and $\dot g(a)$ is the derivative of $g(a)$. To derive the above result, we use the standard asymptotic expansion of $\widehat\balpha$ given by our Lemma \ref{lem:b3} in Appendix, and the symbol ``$\approx$" indicates that the difference between the two lines is up to $o_p(n^{-1/2})$ and, thus, asymptotically negligible. So when OR is correct, the asymptotic variance of $n^{1/2}(\widehat\mu\subDR-\mu_0)$ is equal to that of $n^{1/2}(\widetilde\mu\subDR-\mu_0)$, which can be expressed as
\begin{equation}
 \aVar\{n^{1/2}(\widehat\mu\subDR-\mu_0)\}=\EE_{\Ssc}\{\bar r(\X)\}^2 v(\X)+2\bL\trans \EE_{\Ssc}\X\bar r(\X)v(\X)+C,
\label{equ:2.1.4}
\end{equation}
where $v(\x)=\Var(Y\mid\X)$ and $C$ is some positive constant free of $\bar r(\cdot)$ and, thus, needs not to be considered in the following derivation. Note that when the PS model also is correct, i.e., $\bar r(\cdot)=r_0(\cdot)$, we further have $\bL=\bzero$. 

Empirically, term $\bL$ in (\ref{equ:2.1.4}) can be estimated by 
\begin{equation}
\widehat\bL=-\widehat\bH^{-1}\left\{\widehat\EE_{\Ssc}\X \dot g(\X\trans\widehat\balpha)\exp(\X\trans\widehat\bgamma)-\widehat\EE_{\Tsc}\X \dot g(\X\trans\widehat\balpha)\right\},
\label{equ:L}
\end{equation}
where $\widehat\bH=\widehat\EE_{\Ssc}\X\X\trans\dot g(\X\trans\widehat\balpha)$. Estimation of $v(\x)$ relies on our \emph{working} assumption on the form of $\Var(Y\mid\X)$. For example, one may assume $Y=m_0(\X)+\epsilon$ where $\epsilon\sim {\rm N}(0,\sigma^2)$ so $v(\x)$ is invariant of $\x$ and can be simply imputed with the moment estimator of $\sigma^2$. Also, for the common Poisson model $Y\sim {\rm Poisson}\{\exp(\X\trans\balpha_0)\}$ and logistic model $Y\sim {\rm Bernoulli}\{{\rm expit}(\X\trans\balpha_0)\}$, one can naturally estimate $v(\x)$ by $\exp(\x\trans\widehat\balpha)$ and ${\rm expit}(\x\trans\widehat\balpha)\{1-{\rm expit}(\x\trans\widehat\balpha)\}$ respectively. To preserve generality, we introduce a \emph{working} model $v_{\btheta}(\x)$ for $v(\x)$ with some nuisance parameter $\btheta$ to be estimated as $\widehat\btheta$ that could be partially or fully determined by $\widehat\balpha$. Suppose that $\widehat\btheta$ converges to some $\bar\btheta$. As will be shown in Section \ref{sec:thm}, violation of this conditional variance model, i.e., $v(\x)\neq v_{\bar\btheta}(\x)$ does not impact the double robustness of our proposed estimator but only affects its efficiency gain when PS is wrong and OR is correct.

\subsection{PAD estimator}\label{sec:method:pad}

Now we formally introduce the propensity score augmented doubly robust (PAD) estimator. Our central idea is to augment the PS model $\bar r(\X)=\exp(\X\trans\bar\bgamma)$ as $\bar r\subaug(\X;\bbeta)=\exp(\X\trans\bar\bgamma)+\bPsi\trans\bbeta$ and use $\bar r\subaug(\cdot)$ to replace $\bar r(\cdot)$ in the DR estimator. Here $\bPsi$ is some properly constructed basis function of $\X$ and $\bbeta$ is some loading coefficient vector to be estimated. We first describe the empirical construction procedures for PAD in Algorithm \ref{alg:1} and then discuss the reason and intuition of the key steps in this algorithm. 

\begin{algorithm}[htb!]
\caption{\label{alg:1} Propensity score Augmented Doubly robust (PAD) estimation}
\begin{algorithmic}
 \State[Step 1] Solve the estimating equations in (\ref{equ:2.1.1}) to obtain $\widehat\bgamma$ and $\widehat\balpha$, and obtain the conditional variance estimator as $\widehat\btheta$.
~\\
 \State[Step 2] Specify $\bPhi=\bphi(\X)$ of larger dimensionality than $\X$ using any basis function $\bphi(\cdot)$, and take $\widehat\bPsi=\bPhi-{\widehat\EE_{\Tsc}[\bPhi v_{\widehat\btheta}(\X)]}/{\widehat\EE_{\Tsc}v_{\widehat\btheta}(\X)}$.
 ~\\
 \State[Step 3] Solve the restricted weighted least square (RWLS) problem:
 \begin{equation}
 \widehat\bbeta={\rm argmin}_{\bbeta} \widehat V_{\mu}(\bbeta),\quad\mbox{s.t.}\quad\widehat\EE_{\Ssc}\X \dot{g}(\X\trans\widehat\balpha)\widehat\bPsi\trans\bbeta=\bzero,
 \label{equ:rwls}
 \end{equation}
where
\begin{equation}
\widehat V_{\mu}(\bbeta)=\widehat\EE_{\Ssc}\{\exp(\X\trans\widehat\bgamma)+\widehat\bPsi\trans\bbeta\}^2 v_{\widehat\btheta}(\X)+2\widehat\bL\trans \widehat\EE_{\Ssc}\X\{\exp(\X\trans\widehat\bgamma)+\widehat\bPsi\trans\bbeta\}v_{\widehat\btheta}(\X),
\label{equ:vbeta}
\end{equation}
and $\widehat\bL$ is as defined in equation (\ref{equ:L}).
 ~\\
 \State[Step 4] Obtain the PAD estimator through
\[
 \widehat\mu\subPAD=\widehat\EE_{\Ssc}\{Y-g(\X\trans\widehat\balpha)\}\{\exp(\X\trans\widehat\bgamma)+\widehat\bPsi\trans\widehat\bbeta\}+\widehat\EE_{\Tsc}g(\X\trans\widehat\balpha).
 \] 
\end{algorithmic}
\end{algorithm}
For heuristic analysis, suppose that all estimators used in (\ref{equ:vbeta}) converge to their limiting values. Then let $\bPsi=\bPhi-{\EE_{\Tsc}[\bPhi v_{\bar\btheta}(\X)]}/{\EE_{\Tsc}v_{\btheta}(\X)}$ be the limits of $\widehat\bPsi$, $\bar\bbeta$ the limits of $\widehat\bbeta$, with its specific form given by Lemma \ref{lem:b1} in Appendix, and
\[
 V_{\mu}(\bbeta)=\EE_{\Ssc}\{\exp(\X\trans\bar\bgamma)+\bPsi\trans\bbeta\}^2 v_{\bar \btheta}(\X)+2\bL\trans \EE_{\Ssc}\X\{\exp(\X\trans\bar\bgamma)+\bPsi\trans\bbeta\}v_{\bar \btheta}(\X)
\]
the limiting function of $\widehat V_{\mu}(\bbeta)$ specified in Algorithm \ref{alg:1}. We shall consider two scenarios separately to demonstrate that our proposed PAD estimator not only maintains double robustness property but also has a lower asymptotic variance than $\widehat\mu\subDR$ when the OR model is correctly specified and PS is wrong. Rigorous justification for these results will be provided in Section \ref{sec:thm}.

\paragraph{Correct PS model.} When the PS model is correct, we easily have $\bL=\bzero$ as stated in Section \ref{sec:method:expand} so $V_{\mu}(\bbeta)=\EE_{\Ssc}\{\exp(\X\trans\bar\bgamma)+\bPsi\trans\bbeta\}^2 v_{\bar \btheta}(\X)$, and 
\[
\frac{\partial V_{\mu}(\bbeta)}{\partial\bbeta}=\EE_{\Ssc}\bPsi\exp(\X\trans\bar\bgamma)v_{\bar \btheta}(\X)=\EE_{\Tsc}\bPsi v_{\bar \btheta}(\X).
\]
By definition of $\bPsi$, we have $\EE_{\Tsc}\bPsi v_{\bar \btheta}(\X)=\bzero$, as ensured by the mean shift of $\bPhi$ in Step 2 of Algorithm \ref{alg:1}. Thus, $\bbeta=\bzero$ minimizes $V_{\mu}(\bbeta)$ and consequently, is the solution of the population-level version of the RWLS problem (\ref{equ:rwls}) since the linear constraints in (\ref{equ:rwls}) is trivially satisfied by $\bbeta=\bzero$. This implies that as long as the PS model is correct, $\widehat\bbeta$ converges to $\bzero$ so the augmented PS estimator $\exp(\X\trans\widehat\bgamma)+\widehat\bPsi\trans\widehat\bbeta$ converges to the correct PS model, which ensures $\widehat\mu\subPAD$ to converge to the true $\mu_0$. Meanwhile, it is clear that the augmentation of PS does not change the OR model at all. Therefore, $\widehat\mu\subPAD$ preserves the same DR property as $\widehat\mu\subDR$, i.e., being (root-$n$) consistent whenever the PS or the OR model is correctly specified.

\paragraph{Correct OR and wrong PS.} Note that $\widehat\mu\subPAD=\widehat\mu\subDR+\widehat\EE_{\Ssc}\widehat\bPsi\trans\widehat\bbeta\{Y-g(\X\trans\widehat\balpha)\}$ and when the OR model is correct,
\begin{equation}
\begin{split}
\widehat\EE_{\Ssc}\widehat\bPsi\trans\widehat\bbeta\{Y-g(\X\trans\widehat\balpha)\}=&\widehat\EE_{\Ssc}\widehat\bPsi\trans\widehat\bbeta\{Y-g(\X\trans\balpha_0)\}+\widehat\EE_{\Ssc}\widehat\bPsi\trans\widehat\bbeta\{g(\X\trans\balpha_0)-g(\X\trans\widehat\balpha)\}\\
\approx&\widehat\EE_{\Ssc}\bPsi\trans\bar\bbeta\{Y-g(\X\trans\balpha_0)\}+\widehat\EE_{\Ssc} (\balpha_0-\widehat\balpha)\trans\X \dot{g}(\X\trans\widehat\balpha)\widehat\bPsi\trans\widehat\bbeta,
\end{split}
\label{equ:decom}
\end{equation}
in which we use the orthogonality between $\widehat\bPsi\trans\widehat\bbeta-\bPsi\trans\bar\bbeta$ and $Y-g(\X\trans\balpha_0)$ on the first term, as well as expansion on $g(\X\trans\balpha_0)-g(\X\trans\widehat\balpha)$ in the second term of the first line, to derive the ``$\approx$" relation shown in the second line. Here, ``$\approx$" in (\ref{equ:decom}) again means that the difference between the first and second line is up to $o_p(n^{-1/2})$ and, thus, becomes asymptotically negligible. In addition, according to the moment constraint in the RWLS problem (\ref{equ:rwls}), $\widehat\EE_{\Ssc}\X \dot{g}(\X\trans\widehat\balpha)\widehat\bPsi\trans\widehat\bbeta$ converges to $\bzero$. So the second term in the second line of (\ref{equ:decom}) is also negligible and $\widehat\mu\subPAD\approx \widehat\mu\subDR+\widehat\EE_{\Ssc}\bPsi\trans\bar\bbeta\{Y-g(\X\trans\balpha_0)\}$. Combining this with equation (\ref{equ:2.1.4}) as well as the asymptotic equivalence between $\widehat\mu\subDR$ and $\widetilde\mu\subDR$ discussed in Section \ref{sec:method:expand}, we have
\begin{equation}
\aVar\{n^{1/2}(\widehat\mu\subPAD-\mu_0)\}=\EE_{\Ssc}\{\bar r(\X)+\bPsi\trans\bar\bbeta\}^2 v(\X)+2\bL\trans \EE_{\Ssc}\X\{\bar r(\X)+\bPsi\trans\bar\bbeta\}v(\X)+C,
\label{equ:var:pad}
\end{equation}
which, after dropping the invariant $C$, is equal to $V_{\mu}(\bar\bbeta)$, the limiting value of the minimized objective function $\widehat V_{\mu}(\widehat\bbeta)$ in the RWLS problem (\ref{equ:rwls}). Note that $\bbeta=\bzero$ is always feasible to the linear constraint in (\ref{equ:rwls}) and if we simply replace $\bar\bbeta$ with $\bzero$ in the right-hand side of (\ref{equ:var:pad}), it reduces to the asymptotic variance of $n^{1/2}(\widehat\mu\subDR-\mu_0)$ derived in (\ref{equ:2.1.4}). Meanwhile, when the PS model is wrong, ${\partial V_{\mu}(\bbeta)}/{\partial\bbeta}$ is typically not $\bzero$ at $\bbeta=\bzero$ so the population-level minimizer $\bar\bbeta\neq 0$. Thus, $\aVar\{n^{1/2}(\widehat\mu\subPAD-\mu_0)\}\leq \aVar\{n^{1/2}(\widehat\mu\subDR-\mu_0)\}$ when the OR model is correct and the strict ``$<$" will hold in general when the PS model is wrong.

\section{Asymptotic analysis}\label{sec:thm}

In this section, we rigorously present the asymptotic properties of the proposed PAD estimator and compare PAD with the standard DR estimator. We first introduce some mild and common regularity assumptions. Without loss of generality, we assume that $n/N=O(1)$ so the desirable parametric rate of the DR estimators will be $O(n^{-1/2})$.
\begin{assumption}
The supports of $\X$ and $\bPhi$ are compact and $\mathbb{E}Y^4<\infty$.
\label{asu:1}
\end{assumption}

\begin{assumption}
The link function $g(\cdot)$ is differentiable with derivative $\dot g(\cdot)$ and there exists a constant $L$ such that $|\dot g(x_1)-\dot g(x_2)|<L|x_1-x_2|$ for all $x_1,x_2\in\mathbb{R}$.
\label{asu:2}
\end{assumption}

\begin{assumption}
The dimension of $\bPsi$ is larger than that of $\X$. Matrices $\EE_{\Ssc}\{\bPsi\bPsi\trans v_{\bar \btheta}(\X)\}$, $\EE_{\Ssc}\{\X\X\trans\exp(\X\trans\bar\bgamma)\}$, $\EE_{\Ssc}\{\X\X\trans\dot g(\X\trans\bar\balpha)\}$ and $\EE_{\Ssc}\{\bPsi\X\trans\dot g(\X\trans\bar\balpha)\}$ have all their eigenvalues bounded and staying away from zero.
\label{asu:3}
\end{assumption}

\begin{assumption}
The conditional variance function $v_{\btheta}(\x)$ is differentiable on $\btheta$ with a bounded partial derivative $\partial_{\btheta} v_{\btheta}(\x)$. The estimator $\widehat\btheta$ converges to some $\bar\btheta$ in probability and satisfies that $n^{1/2}(\widehat\btheta-\bar\btheta)$ is asymptotic normal with mean zero.
\label{asu:4}
\end{assumption}

\begin{remark}
Assumptions \ref{asu:1}--\ref{asu:3} are all mild, standard, and commonly used to justify the asymptotic properties of M-estimation \citep{van2000asymptotic}. Note that in Assumption \ref{asu:3}, we take $\bPsi$ to have larger dimension than $\X$ and make regularity conditions on $\EE_{\Ssc}\{\X\X\trans\dot g(\X\trans\bar\balpha)\}$ and $\EE_{\Ssc}\{\bPsi\X\trans\dot g(\X\trans\bar\balpha)\}$. These are to ensure that $\widehat\bbeta$ is not zero and properly converges to $\bar\bbeta$. Assumption \ref{asu:4} constrains the way of specifying $v_{\btheta}(\x)$ and estimating $\btheta$. Under Assumptions \ref{asu:1}--\ref{asu:3}, this assumption is satisfied when either $\btheta$ is fully determined by $\balpha$, e.g., in a Poisson or logistic model for $Y$ against $\X$, or when $\btheta$ is estimated by additionally fitting some parametric model of $\Var(Y\mid\X)$ against $\X$. 
\end{remark}

Now we present the main results about the robustness and efficiency of our proposed PAD estimator in Theorem \ref{thm:1} with its proof given in Section \ref{sec:proof} of the Appendix. Some important heuristics of this theorem has already been discussed in Section \ref{sec:method:pad}.
\begin{theorem}
Under Assumptions \ref{asu:1}--\ref{asu:4}, it holds that
\begin{enumerate}
    \item[(i)] {\bf Double robustness}. When either the PS or the OR model is correctly specified, i.e., $r_0(\x)=\exp(\x\trans\bgamma_0)$ for some $\bgamma_0$ or $m_0(\x)=g(\x\trans\balpha_0)$ for some $\balpha_0$, $\muhat\subPAD\xrightarrow{p}\mu_0$ and $n^{1/2}(\muhat\subPAD-\mu_0)$ weakly converges to some normal distribution with mean zero.
    
    \item[(ii)] {\bf Variance reduction under wrong PS}. When the OR model is correct while the PS model may be misspecified, the asymptotic variance of $n^{1/2}(\muhat\subPAD-\mu_0)$ is always not larger than that of $n^{1/2}(\muhat\subDR-\mu_0)$. Further when $\bar\bbeta\neq 0$ (the explicit form of $\bar\bbeta$ is given in Lemma \ref{lem:b1}), $n^{1/2}(\muhat\subPAD-\mu_0)$ has a strictly smaller asymptotic variance than $n^{1/2}(\muhat\subDR-\mu_0)$.
    
    \item[(iii)] {\bf Equivalence under correct PS and OR}. When both the PS and OR models are correct, $n^{1/2}(\muhat\subPAD-\mu_0)$ and $n^{1/2}(\muhat\subDR-\mu_0)$ are asymptotically equivalent and have the same asymptotic variance.

\end{enumerate}
\label{thm:1}
\end{theorem}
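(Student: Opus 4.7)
The plan is to formalize the heuristic derivation of Section \ref{sec:method:pad} by combining standard M-estimation theory (cf.\ Lemma \ref{lem:b3}) with the explicit characterization of $\bar\bbeta$ as the population solution to (\ref{equ:rwls}) (Lemma \ref{lem:b1}). Under Assumptions \ref{asu:1}--\ref{asu:4}, each plug-in nuisance $(\widehat\bgamma,\widehat\balpha,\widehat\btheta,\widehat\bL,\widehat\bbeta)$ is root-$n$ consistent for its population counterpart, and since
\[
\widehat\mu\subPAD-\widehat\mu\subDR=\widehat\EE_{\Ssc}\widehat\bPsi\trans\widehat\bbeta\{Y-g(\X\trans\widehat\balpha)\},
\]
the theorem reduces to controlling this residual in each regime.

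\textbf{Part (i), correct PS.} I first show $\bar\bbeta=\bzero$. Differentiating $V_{\mu}(\bbeta)$ gives
\[
\partial_{\bbeta} V_{\mu}(\bzero)=2\EE_{\Ssc}\bPsi\exp(\X\trans\bar\bgamma)v_{\bar\btheta}(\X)+2\EE_{\Ssc}\bPsi\X\trans v_{\bar\btheta}(\X)\bL.
\]
When PS is correct, $\bar r=r_0$ forces $\bL=\bzero$ (Section \ref{sec:method:expand}), and $\EE_{\Ssc}\bPsi\exp(\X\trans\bar\bgamma)v_{\bar\btheta}(\X)=\EE_{\Tsc}\bPsi v_{\bar\btheta}(\X)=\bzero$ by the mean-shift definition of $\bPsi$ in Step 2 of Algorithm \ref{alg:1}. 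Since $V_{\mu}$ is strictly convex by Assumption \ref{asu:3} and $\bbeta=\bzero$ trivially satisfies the linear constraint, $\bar\bbeta=\bzero$. Hence the residual is $o_p(n^{-1/2})$, and the standard DR consistency and CLT for $\widehat\mu\subDR$ transfer to $\widehat\mu\subPAD$.

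\textbf{Part (i), correct OR.} I split the residual as in (\ref{equ:decom}). For the first piece, $\widehat\bPsi\trans\widehat\bbeta-\bPsi\trans\bar\bbeta=O_p(n^{-1/2})$ and $\EE[Y-g(\X\trans\balpha_0)\mid\X]=0$ give, via a Neyman-orthogonality argument,
\[
\widehat\EE_{\Ssc}\widehat\bPsi\trans\widehat\bbeta\{Y-g(\X\trans\balpha_0)\}=\widehat\EE_{\Ssc}\bPsi\trans\bar\bbeta\{Y-g(\X\trans\balpha_0)\}+o_p(n^{-1/2}).
\]
For the second piece I Taylor-expand $g(\X\trans\balpha_0)-g(\X\trans\widehat\balpha)$ using the Lipschitz derivative of Assumption \ref{asu:2}, and invoke the RWLS moment constraint $\widehat\EE_{\Ssc}\X\dot g(\X\trans\widehat\balpha)\widehat\bPsi\trans\widehat\bbeta=\bzero$ from (\ref{equ:rwls}) to absorb the linear term, leaving a quadratic remainder of order $o_p(n^{-1/2})$. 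Combining these produces
\[
n^{1/2}(\widehat\mu\subPAD-\mu_0)=n^{1/2}(\widehat\mu\subDR-\mu_0)+n^{1/2}\widehat\EE_{\Ssc}\bPsi\trans\bar\bbeta\{Y-g(\X\trans\balpha_0)\}+o_p(1),
\]
which is asymptotically normal with mean zero by the CLT and establishes part (i) in this case.

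\textbf{Parts (ii) and (iii).} Computing the variance of the right-hand side above, using the influence-function expansion of $\widehat\mu\subDR$ in Section \ref{sec:method:expand}, recovers exactly (\ref{equ:var:pad}), i.e.\ $\aVar\{n^{1/2}(\widehat\mu\subPAD-\mu_0)\}=V_{\mu}(\bar\bbeta)+C$. Since $\bbeta=\bzero$ is feasible for the linear constraint and $\bar\bbeta$ minimizes $V_{\mu}$ over the feasible set, $V_{\mu}(\bar\bbeta)\leq V_{\mu}(\bzero)$; combined with $V_{\mu}(\bzero)+C=\aVar\{n^{1/2}(\widehat\mu\subDR-\mu_0)\}$ from (\ref{equ:2.1.4}), this yields the desired inequality, strict when $\bar\bbeta\neq\bzero$ by strict convexity of $V_{\mu}$ (Assumption \ref{asu:3}). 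This gives part (ii). For part (iii), when both models are correct, the argument for part (i) gives $\bar\bbeta=\bzero$, so the residual is $o_p(n^{-1/2})$ and $\widehat\mu\subPAD,\widehat\mu\subDR$ are asymptotically equivalent.

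The main obstacle is the correct-OR expansion above: I need a uniform-in-parameter empirical-process bound to control cross terms such as $\widehat\EE_{\Ssc}(\widehat\bPsi\trans\widehat\bbeta-\bPsi\trans\bar\bbeta)\{Y-g(\X\trans\balpha_0)\}$, and I must track the RWLS constraint precisely enough to cancel the Taylor-linear piece of the second residual at the $o_p(n^{-1/2})$ level rather than merely approximately. The compact supports and moment conditions of Assumption \ref{asu:1}, the Lipschitz derivative of Assumption \ref{asu:2}, and the spectral bounds of Assumption \ref{asu:3} together supply the Donsker/Glivenko--Cantelli and invertibility ingredients needed, though the bookkeeping is heavy because $\widehat\bbeta$ depends jointly on $(\widehat\bgamma,\widehat\balpha,\widehat\btheta,\widehat\bL)$ and all five expansions must be plugged in simultaneously.
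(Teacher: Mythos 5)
Your overall architecture — writing $\widehat\mu\subPAD=\widehat\mu\subDR+\widehat\EE_{\Ssc}\widehat\bPsi\trans\widehat\bbeta\{Y-g(\X\trans\widehat\balpha)\}$, killing the $\widehat\balpha$-dependence of the residual via the RWLS constraint under correct OR, and identifying $\aVar\{n^{1/2}(\widehat\mu\subPAD-\mu_0)\}$ with $V_\mu(\bar\bbeta)+C$ so that (ii) follows from feasibility of $\bbeta=\bzero$ — is the same route the paper takes, and your treatment of (ii), (iii), and the correct-OR half of (i) is sound (your appeal to strict convexity for the strict inequality is, if anything, slightly more careful than the paper's).

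The genuine gap is in the correct-PS half of (i). From $\bar\bbeta=\bzero$ and $\widehat\bbeta=O_p(n^{-1/2})$ you conclude that the residual is $o_p(n^{-1/2})$, so that the CLT for $\widehat\mu\subDR$ "transfers." But the residual equals $\widehat\bbeta\trans\,\widehat\EE_{\Ssc}\widehat\bPsi\{Y-g(\X\trans\widehat\balpha)\}$, and when the OR model is \emph{misspecified} the second factor converges in probability to $\EE_{\Ssc}\bPsi\{m_0(\X)-g(\X\trans\bar\balpha)\}$, which is generally nonzero because $\bPsi$ lies outside the span of $\X$ (only the moments $\EE_{\Ssc}\X\{Y-g(\X\trans\bar\balpha)\}=\bzero$ are enforced by the estimating equation). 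Hence the residual is exactly $O_p(n^{-1/2})$, not $o_p(n^{-1/2})$: $\widehat\mu\subPAD$ is \emph{not} asymptotically equivalent to $\widehat\mu\subDR$ under correct PS and wrong OR, and the extra term $n^{1/2}\widehat\bbeta\trans\EE_{\Ssc}\bPsi\{m_0(\X)-g(\X\trans\bar\balpha)\}$ contributes a nondegenerate component to the limit. To close this you must argue, as the paper does, that $\widehat\bbeta$ is a smooth function of $(\widehat\bgamma,\widehat\balpha,\widehat\btheta)$ and hence $n^{1/2}\widehat\bbeta$ is jointly asymptotically normal with the leading influence-function term of $\widehat\mu\subDR$, so that the sum is still asymptotically normal with mean zero by Slutsky; the conclusion of (i) survives, but not by inheritance of the DR estimator's CLT. (Note that your $o_p(n^{-1/2})$ claim for the residual \emph{is} valid in case (iii), where correct OR makes $\widehat\EE_{\Ssc}\widehat\bPsi\{Y-g(\X\trans\widehat\balpha)\}=O_p(n^{-1/2})$ as well, so that part of your argument stands.)
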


\section{Simulation study}\label{sec:simu}
We conducted simulation studies to evaluate our proposed estimator and compare it with the standard DR estimator. In our studies, we generate covariates $\X=(X_1,X_2,X_3)\trans$ from ${\rm N}(\bzero,\bSigma)$ with $\bSigma=(\sigma_{ij})\in \mathbb{R}^{3\times 3}$ and $\sigma_{ij}=0.3^{|i-j|}$. For generation of the population assignment $\Delta$ and outcome $Y$, we consider six settings, namely:
\begin{enumerate}
    \item[(G1)] {\bf Gassuian $Y$, Correct PS, Correct OR.} $\text{Pr}(\Delta=1\mid \X)\}={\rm expit}(X_1-2X_2+X_3)$ and $Y=0.5X_1+0.5X_2+X_3+\epsilon$ where $\epsilon\mid\X\sim {\rm N}(0,1)$.

    \item[(G2)] {\bf Gassuian $Y$, Correct PS, Wrong OR.} $\text{Pr}(\Delta=1\mid \X)={\rm expit}(X_1-2X_2+X_3)$ and $Y=0.5X_1+0.5X_2+\sin(X_2+0.5X_3)+\epsilon$.

    \item[(G3)] {\bf Gassuian $Y$, Wrong PS, Correct OR.} $\text{Pr}(\Delta=1\mid \X)={\rm expit}(4+X_1+X_2+X_3-1.5|X_1|-1.5|X_2|-|X_3|)$ and $Y=0.5X_1+0.5X_2+X_3+\epsilon$.

    \item[(L1)] {\bf Binary $Y$, Correct PS, Correct OR.} $\text{Pr}(\Delta=1\mid \X)={\rm expit}(X_1-2X_2+X_3)$ and $\text{Pr}(Y=1\mid \X)={\rm expit}(0.5X_1+0.5X_2+X_3)$.

    \item[(L2)] {\bf Binary $Y$, Correct PS, Wrong OR.} $\text{Pr}(\Delta=1\mid \X)={\rm expit}(X_1-2X_2+X_3)$ and $\text{Pr}(Y=1\mid X)\}={\rm expit}(0.5X_1+0.5X_2+\sin(X_2+0.5X_3))$

    \item[(L3)] {\bf Binary $Y$, Wrong PS, Correct OR.} $\text{Pr}(\Delta=1\mid \X)={\rm expit}(4+X_1 + X_2 + X_3 - 1.5|X_1| - 1.5|X_2| -|X_3|)$ and $\text{Pr}(Y=1\mid \X)={\rm expit}(0.5X_1+0.5X_2+X_3)$.
    
\end{enumerate}
In Settings (G1)--(G3), $Y$ is a gaussian variable and we fit linear models for $Y\sim\X$ with $v_{\btheta}(\x)=1$. While in Settings (L1)--(L3), we fit logistic models for the binary $Y$ against $\X$ with $v_{\btheta}(\x)={\rm expit}(\X\trans\balpha)\{1-{\rm expit}(\X\trans\balpha)\}$. We consider different scenarios about the correctness of the PS and OR models to examine the robustness and efficiency of PAD. Bootstrap is used for estimating the asymptotic variance and constructing the confidence interval (CI). For effective variance reduction on PAD when PS is wrong, i.e. under Settings (G3) and (L3), we include in the augmentation covariates $\bPhi$ a decent amount of $\X$'s basis functions including $X_j$, $\exp(X_j)$, $|X_j|$, $\exp(-X_{j_1}-X_{j_2})$, and $\exp(-X_1-X_2-X_3)$ for all $j$ and $j_1\neq j_2\in\{1,2,3\}$. We set $N=n=500$ or $N=n=1000$ separately and generate $1000$ realizations for each setting.

Table \ref{tab:1} reports the absolute average bias (Bias), standard error (SE), and coverage probability (CP) of the 95\% CI of the DR and PAD estimators. When at least one nuisance models are correct, DR and PAD attain very close bias, which is much smaller compared to their SE and, thus, grants their CPs to be close to the nominal level. This indicates that PAD achieves the double robustness property just like the standard DR estimator under finite samples. To compare PAD and DR in terms of their estimation variance and efficiency, we present in Table \ref{tab:2} their relative efficiency (RE) defined as $\Var(\widehat\mu\subDR)/\Var(\widehat\mu\subPAD)$. Under Settings (G1), (G2), (L1), and (L2) where the PS model is correct, the two estimators show nearly identical variance, with their REs located between $1\pm 0.04$. Under Settings (G3) and (L3) with misspecified PS and correct OR models, our proposed PAD estimator shows 20\% to 40\% smaller variance than the standard DR estimator. All these results demonstrate that conclusions in Theorem \ref{thm:1} also apply well for finite samples. In specific, PAD performs very closely to the standard DR when the PS model is correct and is potentially better than DR in the presence of wrong PS models.

	\begin{table}[htb]
		\centering
		\caption{\label{tab:1} The absolute average bias (Bias), standard error (SE), and coverage probability (CP) of the 95\% confidence intervals of the DR and PAD estimators under the settings described in Section \ref{sec:simu}. All results are produced based on $1000$ repetitions.}
		\setlength{\tabcolsep}{3.5mm}{
			\begin{tabular}{@{}cccccccccc@{}}
				\toprule
				&  &                 \multicolumn{3}{c}{$n=N=500$} & \multicolumn{3}{c}{$n=N=1000$} 
				\\ \cmidrule(lr){3-5} \cmidrule(lr){6-8} 
				Setting  &     Method  & Bias   & SE  & CP  & bias   & SE  & CP
				\\ \midrule
                 (G1) & DR & 0.006 & 0.145 & 0.94 & 0.005 & 0.106 & 0.92
                \\
                &PAD & 0.005 & 0.142 & 0.93 &0.004 & 0.105 & 0.92
                \\ \midrule
                 (G2)  & DR & 0.007 & 0.152 & 0.92 & 0.008 & 0.111 & 0.92
                \\
                &PAD & 0.005 & 0.149  & 0.92 & 0.007 & 0.112 & 0.92
                \\ \midrule
                (G3) & DR & 0.010 & {\bf 0.162} &0.93 & 0.001 & {\bf 0.121} &0.92
                \\
                &PAD & 0.005 & {\bf 0.136}  & 0.93 &0.001 &{\bf 0.105} &0.93
                \\ \midrule
                (L1) & DR & 0.000 & 0.055 &0.92 & 0.001 &0.040 &0.92
                \\
                &PAD & 0.001 & 0.054 & 0.93& 0.001&0.040 &0.93
                \\ \midrule
                (L2) &DR &0.001 & 0.054 &0.92 & 0.004 &0.040 &0.92
                \\
                &PAD & 0.001 & 0.053 & 0.92&0.004 &0.040  &0.92
              \\ \midrule
                (L3) & DR & 0.005 & {\bf 0.057} &0.91 & 0.003 &  {\bf 0.038} &0.92
                \\
                &PAD & 0.005 & {\bf 0.052} & 0.93&0.002 &  {\bf 0.035} &0.93
                
				\\ \bottomrule
		\end{tabular}}
	\end{table} 

 \begin{table}[htb!]
		\centering
		\caption{\label{tab:2} Relative efficiency (RE) between DR and PAD, i.e., $\Var(\widehat\mu\subDR)/\Var(\widehat\mu\subPAD)$, under the settings described in Section \ref{sec:simu}.}
		\setlength{\tabcolsep}{4mm}{
			\begin{tabular}{@{}ccccccc@{}}
				\toprule
				$n,N$ & (G1) & (G2) & (G3) & (L1) & (L2) & (L3)
                \\ \midrule
                $500$ & 1.04 &1.04 & {\bf 1.42} & 1.04 & 1.04 & {\bf 1.20}
                 \\ 
                $1000$ & 1.02  &0.98  & {\bf 1.33} & 1.00 & 1.00 & {\bf 1.18}
				\\ \bottomrule
		\end{tabular}}
	\end{table}

\section{Real example}
The effects of the 401(k) program have been investigated for a long time \citep[e.g.]{abadie2003semiparametric,chernozhukov2018double}. Different from other plans like Individual Retirement Accounts (IRAs), eligibility for 401(k) is completely decided by employers. Therefore, unobserved personal preferences for savings may make little difference in 401(k) eligibility. However, there may be some other confounders affecting the causal studies of 401(k), such as job choice, income, and age. To address this problem, \citep{abadie2003semiparametric} and \citep{chernozhukov2018double} proposed to adjust for certain covariates related to job choice so that 401(k) eligibility can be regarded exogenous.

Whether 401(k) eligibility contributes to the improvement of people's net total financial assets is an important topic studied in existing literature like \cite{abadie2003semiparametric} and \cite{chernozhukov2018double}. However, whether 401(k) can improve the financial assets of those actually not eligible for 401(k) is still an open and interesting problem. To investigate this problem, we analyze the data from the Survey of Income and Program Participation of 1991. The data set consists of $n+N=9275$ observations. The outcome of our interests, $Y$ is defined as the indication of having positive net total financial assets. There are $9$ adjustment covariates in $\X$, including age, income, family size, years of education, benefit pension status, marriage, two-earner household status, individual participation in IRA plan, and home ownership status. The source (treated) samples $\Ssc$ with $\Delta=1$ are taken as those eligible for 401(k) and the target (untreated) samples $\Tsc$ are those without 401(k) eligibility. We applied PAD and standard DR to estimate $\mu$, the effect of 401(k) eligibility on improving the positive rate of net total financial assets among people without 401(k) eligibility. The PS model is specified as $\exp(\X\trans\bgamma)$ and the OR model is ${\rm expit}(\X\trans\balpha)$. In our method, the augmentation covariates vector $\bPhi$ consists of $\X$, $\exp(-0.3X_j)$, $|X_j|$, and $X_j^2$ for all $X_j$'s that are not binary. We again use bootstrap to estimate SEs and construct CIs.

In Table \ref{tab:3}, we report the point estimation, their estimated standard errors (ESE), and 95\% CIs for the treatment effect $\mu$, obtained using the standard DR and our proposed PAD methods. Outputs of both methods indicate that 401(k) eligibility has a significant effect on improving the rate of having positive net total financial assets among people who are actually not eligible for 401(k). The estimated treatment effect is $0.169$ (95\% CI: $0.142, 0.196$) by the standard DR and $0.150$ (95\% CI: $0.126, 0.175$) by PAD. Moreover, the ESE of our proposed PAD estimator is remarkably smaller than that of the standard DR estimator, with their estimated RE, i.e., $\Var(\widehat\mu\subDR)/\Var(\widehat\mu\subPAD)$ being around $1.25$. This means our proposed PAD method can characterize the treatment effect $\mu$ more precisely than DR in this example. 

\begin{table}[h]
		\centering
		\caption{\label{tab:3} The point estimation (PE), its estimated standard error (ESE), and 95\% confidence interval (CI) for $\mu$, the effect of 401(k) eligibility on improving the positive rate of net total financial assets among people without 401(k) eligibility, derived using the standard DR and the PAD methods.}
		\setlength{\tabcolsep}{6mm}{
			\begin{tabular}{@{}cccc@{}}
				\toprule
				Method & PE & ESE & CI
                \\ \midrule
                DR& $0.169$  &  $0.0140$  & $(0.142, 0.196)$
                \\
                PAD& $0.150$  &  $0.0125$  & $(0.126, 0.175)$
				\\ \bottomrule
		\end{tabular}}
	\end{table} 

\section{Discussion}

In analogy to our PS model augmentation strategy, we also propose an OR model augmentation strategy (OAD) that augments the OR model with some bases of $\X$ satisfying certain moment conditions like $\bPsi$ in Algorithm \ref{alg:1}. Description and discussion of this method are presented in Section \ref{sec:app:dual} of the Appendix. Similar to Theorem \ref{thm:1}, we are able to show that this OAD estimator is doubly robust, of a smaller variance than the standard DR estimator when the PS model is correct but the OR model is wrong, and equivalent with DR when both nuisance models are correct. Just like PAD, this OAD method is easy to implement and only requires convex optimization. We notice that some existing methods in intrinsic efficient DR estimation like \cite{rotnitzky2012improved} and \cite{gronsbell2022efficient} rely on non-convex training to construct the OR model when it is not linear. This OAD strategy could mitigate this practical problem and still achieves the purpose of variance reduction in the presence of misspecified OR models. 

For ease of demonstration, we focus on covariate shift correction, or equivalently ATT estimation in this paper. Our proposed PAD estimation can be potentially generalized to address other causal or missing data problems like ATE estimation \citep[e.g.]{bang2005doubly}, casual model estimation \cite{rotnitzky2012improved}, transfer learning of a regression model \cite{liu2020doubly}, etc. Also, properly specifying the bases $\bPhi$ is crucial for variance reduction in our method. The optimal choice of $\bPhi$ for the most effective variance reduction is still an open problem. Related to this, it may be useful and interesting to extend our current framework for high-dimensional sparse or sieve construction of the augmentation term $\bPsi\trans\bbeta$.

\bibliographystyle{apalike}
\bibliography{library}


\clearpage
\newpage
\setcounter{page}{1}
\appendix

\setcounter{lemma}{0}
\setcounter{equation}{0}
\setcounter{theorem}{0}
\setcounter{figure}{0}
\setcounter{algorithm}{0}
\setcounter{table}{0}
\renewcommand{\thefigure}{A\arabic{figure}}
\renewcommand{\thetable}{A\arabic{table}}
\renewcommand{\theequation}{A\arabic{equation}}
\renewcommand{\thelemma}{A\arabic{lemma}}
\renewcommand{\thetheorem}{A\arabic{theorem}}
\renewcommand{\thealgorithm}{A\arabic{algorithm}}

\setcounter{definition}{0}
\renewcommand{\thedefinition}{A\arabic{definition}}


\section{Dual construction to augment OR}\label{sec:app:dual}
In analogy to our PAD estimator, to improve the efficiency our the DR estimator under the correct PS and wrong OR models, we propose the Outcome regression Augmented Doubly robust (OAD) estimator in the following algorithm. 
\begin{algorithm}[htb!]
\caption{\label{alg:3} Outcome regression Augmented Doubly robust (OAD) estimation}
\begin{algorithmic}
 \State[Step 1] Solve the estimating equations in (\ref{equ:2.1.1}) to obtain $\widehat\bgamma$ and $\widehat\balpha$, and obtain the conditional variance estimator as $\widehat\btheta$.
~\\
 \State[Step 2] Let $\bPhi=\bphi(\X)$ with function $\bphi(\cdot)$, $\widetilde g(\X\trans\widehat\alpha)=g(\X\trans\widehat\alpha)-\widehat\EE_{\Tsc}g(\X\trans\widehat\alpha)$ and
 $$
 \widehat\bPsi =\bPhi-\frac{\widehat\EE_{\Tsc}\bPhi \widetilde g(\X\trans\widehat\alpha)}{\widehat\EE_{\Tsc}\widetilde g^2(\X\trans\widehat\alpha)}\widetilde g(\X\trans\widehat\alpha).
 $$
 \State[Step 3] Solve the restricted weighted least square (RWLS) problem:
 \begin{equation}
 \widehat\bbeta={\rm argmin}_{\bbeta} \widehat V_{\mu,\OAD}(\bbeta),\quad\mbox{s.t.}\quad\widehat\EE_{\Ssc}\X\widehat\bPsi\trans\bbeta\exp(\X\trans\widehat\bgamma)=\bzero,
 \label{equ:rwls:oad}
 \end{equation}
where
\begin{equation}
\begin{aligned}
    \widehat V_{\mu,\OAD}(\bbeta)=&n^{-1}\widehat\Var_{\Ssc}[\{Y-g(\X\trans\widehat\balpha)-\widehat\bPsi\trans\widehat\bbeta\}\exp(\X\trans\widehat\bgamma)]+N^{-1}\widehat\Var_{\Tsc}\{g(\X\trans\widehat\balpha)+\widehat\bPsi\trans\widehat\bbeta\}
    \\&+2\widehat{\bL^*}\trans[N^{-1} \widehat \Cov_{\Tsc}(\X,\widehat\bPsi\trans\widehat\bbeta)+n^{-1}\widehat\Cov_{\Ssc}\{\X\exp(\X\trans\widehat\bgamma),\widehat \bPsi\trans\widehat\bbeta\exp(\X\trans\widehat\bgamma)\}],
\end{aligned}
\label{equ:vbeta:oad}
\end{equation}
and $\widehat{\bL^*}=\{\widehat\EE_{\Ssc}\X\exp(\X\trans\widehat\bgamma)\X\trans\}^{-1} \widehat\EE_{\Ssc}\{Y-g(\X\trans\widehat\balpha)\}\exp(\X\trans\widehat\bgamma)\X$.
 ~\\
 \State[Step 4] Obtain the OAD estimator:
\[
 \widehat\mu\subOAD=\widehat\EE_{\Ssc}\{Y-g(\X\trans\widehat\balpha)-\widehat\bPsi\trans\widehat\bbeta\}\exp(\X\trans\widehat\bgamma)
   +\widehat\EE_{\Tsc}\{g(\X\trans\widehat\balpha)+\widehat\bPsi\trans\widehat\bbeta\}.
 \] 
\end{algorithmic}
\end{algorithm}

To demonstrate how Algorithm \ref{alg:3} works, we define that
 \begin{align*}
\widetilde\mu\subOAD=&\widehat\EE_{\Ssc}\{Y-g(\X\trans\bar\balpha)-\bPsi\trans\bar\bbeta\}\exp(\X\trans\bar\bgamma)
   +\widehat\EE_{\Tsc}\{g(\X\trans\bar\balpha)+\bPsi\trans\bar\bbeta\}
\\&+\EE_{\Ssc}\{Y-g(\X\trans\bar\balpha)\}\exp(\X\trans\bar\bgamma)\X\trans\{\EE_{\Ssc}\X\exp(\X\trans\bar\bgamma)\X\trans\}^{-1}\{\widehat\EE_{\Tsc}\X-\widehat\EE_{\Ssc}\X\exp(\X\trans\bar\bgamma)\}.
\end{align*}
Then similar to our analysis in Section \ref{sec:method:expand}, when the PS model is correct, $\widehat\mu\subOAD$ is asymptotically equivalent to $\widetilde\mu\subOAD$, and
\begin{align*}
    V_{\mu,\OAD}(\bbeta)=&n^{-1}\Var_{\Ssc}[\{Y-g(\X\trans\bar\balpha)-\bPsi\trans\bar\bbeta\}\exp(\X\trans\bar\bgamma)]+N^{-1}\Var_{\Tsc}\{g(\X\trans\bar\balpha)+\bPsi\trans\bar\bbeta\}
    \\&+2{\bL^*}\trans[N^{-1}\Cov_{\Tsc}(\X,\bPsi\trans\bar\bbeta)+n^{-1}\Cov_{\Ssc}\{\X\exp(\X\trans\bar\bgamma),\bPsi\trans\bar\bbeta\exp(\X\trans\bar\bgamma)\}],
\end{align*}
the limiting function of $\widehat V_{\mu,\OAD}(\bbeta)$ specified in Algorithm \ref{alg:3}, where 
\[
\bL^*=\{\EE_{\Ssc}\X\exp(\X\trans\bar\bgamma)\X\trans\}^{-1} \EE_{\Ssc}\{Y-g(\X\trans\bar\balpha)\}\exp(\X\trans\bar\bgamma)\X.
\]
This corresponds to the objective function in equation (\ref{equ:vbeta:oad}). Similar to the PAD construction, when $\bbeta=\bzero$, $V_{\mu,\OAD}(\bbeta)$ reduces to the asymptotic variance of the standard DR estimator (with a constant difference invariant with $\bbeta$). Thus, $\widehat\mu\subOAD$ has a smaller variance than the standard DR estimator when the PS model is correct and the OR model is wrong, under which we typically have $\bar\bbeta\neq \bzero$.

On the other hand, when OR is correctly specified, we have $\bar\balpha=\balpha_0$, $\bL^*=\bzero$, and thus
\[
\frac{\partial V_{\mu,\OAD}(\bbeta)}{\partial\bbeta}|_{\bbeta=\bzero}=\Cov_{\Tsc}(g(\X\trans\bar\balpha),\bPsi).
\]
By definition of $\bPsi$, we have $\Cov_{\Tsc}(g(\X\trans\bar\balpha),\bPsi)=\bzero$. Hence, similar to the analysis in Section \ref{sec:method:expand}, $\widehat\mu\subOAD$ preserved the same DR property as $\widehat\mu\subDR$, i.e., being root-$n$ consistent whenever the PS or the OR model is correctly specified.

\setcounter{lemma}{0}
\setcounter{equation}{0}
\setcounter{theorem}{0}
\setcounter{figure}{0}
\setcounter{table}{0}
\renewcommand{\thefigure}{B\arabic{figure}}
\renewcommand{\thetable}{B\arabic{table}}
\renewcommand{\theequation}{B\arabic{equation}}
\renewcommand{\thelemma}{B\arabic{lemma}}
\renewcommand{\thetheorem}{B\arabic{theorem}}

\setcounter{definition}{0}
\renewcommand{\thedefinition}{B\arabic{definition}}

\section{Asymptotic justification}\label{sec:proof}
\subsection{Technical lemma}

\begin{lemma}
Define $\ba:=\EE_{\Ssc}\bPsi\X\trans \dot g(\X\trans\bar\balpha) $, $\bb:=\EE_{\Ssc}\bPsi \exp(\X\trans\bar\bgamma)v_{\bar \btheta}(\X)+\EE_{\Ssc}\bPsi\X\trans v_{\bar \btheta}(\X)\bL$, and $\bSigma:=\EE_{\Ssc}\bPsi\bPsi\trans v_{\bar \btheta}(\X)$, under Assumption (\ref{asu:3}), the solution of the RWLS problem (\ref{equ:rwls}) is 
\[
\bar\bbeta=\bSigma^{-1}\ba(\ba\trans\bSigma^{-1}\ba)^{-1}\ba\trans\bSigma^{-1}\bb-\bSigma^{-1}\bb.
\]
\label{lem:b1}
\end{lemma}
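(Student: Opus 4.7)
The plan is to recognize that the restricted weighted least squares problem in (\ref{equ:rwls}), once translated to its population-level counterpart under Assumptions \ref{asu:1}--\ref{asu:4}, reduces to a standard equality-constrained quadratic program, which I will solve by the method of Lagrange multipliers. My first step is to expand $V_\mu(\bbeta)$ (the limit of $\widehat V_\mu(\bbeta)$ in (\ref{equ:vbeta})) and collect terms in $\bbeta$. Writing $V_\mu(\bbeta)=\bbeta\trans\bSigma\bbeta+2\bbeta\trans\bb+C_0$, where $C_0$ gathers the terms independent of $\bbeta$, the quadratic coefficient is exactly $\bSigma=\EE_{\Ssc}\bPsi\bPsi\trans v_{\bar\btheta}(\X)$ and the linear coefficient is $2\bb=2\{\EE_{\Ssc}\bPsi\exp(\X\trans\bar\bgamma)v_{\bar\btheta}(\X)+\EE_{\Ssc}\bPsi\X\trans v_{\bar\btheta}(\X)\bL\}$, matching the statement of the lemma.

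Next, I will rewrite the linear constraint $\EE_{\Ssc}\X\dot g(\X\trans\bar\balpha)\bPsi\trans\bbeta=\bzero$ by factoring out $\bbeta$: since $\EE_{\Ssc}[\X\dot g(\X\trans\bar\balpha)\bPsi\trans]=\ba\trans$, the constraint is simply $\ba\trans\bbeta=\bzero$. The problem thus becomes $\min_{\bbeta}\bbeta\trans\bSigma\bbeta+2\bbeta\trans\bb$ subject to $\ba\trans\bbeta=\bzero$. Introducing a Lagrange multiplier $\blambda\in\mathbb{R}^{d}$, the Lagrangian is $\mathcal{L}(\bbeta,\blambda)=\bbeta\trans\bSigma\bbeta+2\bbeta\trans\bb+2\blambda\trans\ba\trans\bbeta$, whose first-order condition in $\bbeta$ is $\bSigma\bbeta+\bb+\ba\blambda=\bzero$, giving $\bbeta=-\bSigma^{-1}(\bb+\ba\blambda)$. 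Plugging this into $\ba\trans\bbeta=\bzero$ yields $\blambda=-(\ba\trans\bSigma^{-1}\ba)^{-1}\ba\trans\bSigma^{-1}\bb$, and back-substitution delivers the formula claimed in Lemma \ref{lem:b1}.

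The only real substance is justifying the two matrix inversions. From Assumption \ref{asu:3}, $\EE_{\Ssc}\bPsi\bPsi\trans v_{\bar\btheta}(\X)=\bSigma$ has eigenvalues bounded away from zero (since $v_{\bar\btheta}(\X)$ is bounded below and $\bPsi$ spans a space with positive definite second moment), so $\bSigma^{-1}$ exists. For $\ba\trans\bSigma^{-1}\ba$, note that $\ba=\EE_{\Ssc}\bPsi\X\trans\dot g(\X\trans\bar\balpha)$ is $p\times d$ with $p>d$, and Assumption \ref{asu:3} ensures its singular values are bounded away from zero, i.e.\ $\ba$ has full column rank. Hence $\ba\trans\bSigma^{-1}\ba=(\bSigma^{-1/2}\ba)\trans(\bSigma^{-1/2}\ba)$ is a symmetric positive definite $d\times d$ matrix and its inverse is well-defined. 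Because the objective is strictly convex in $\bbeta$ (as $\bSigma\succ 0$) and the feasible set is an affine subspace, the KKT point found is the unique global minimizer, which concludes the proof. No step is delicate; the main care is in bookkeeping the dimensions of $\ba$ (which is $p\times d$, not square) so that the Lagrange multiplier has the correct dimension $d$ and the constraint is handled as a vector-valued equality rather than a scalar one.
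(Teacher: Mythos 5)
Your proposal is correct and follows essentially the same route as the paper: both set up the population-level constrained quadratic program, introduce a Lagrange multiplier, solve the stationarity condition for $\bbeta$ in terms of $\blambda$, and eliminate $\blambda$ via the constraint $\ba\trans\bbeta=\bzero$ to obtain the stated formula. Your additional remarks on the invertibility of $\bSigma$ and $\ba\trans\bSigma^{-1}\ba$ under Assumption \ref{asu:3} and on uniqueness via strict convexity are correct and slightly more careful than the paper's computation, but do not constitute a different argument.
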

\begin{proof}
First we introduce Lagrange multiplier $\blambda$ and write (\ref{equ:rwls}) as the Lagrange form:
\[
\bar\bbeta={\rm argmin}_{\bbeta}~\EE_{\Ssc}\{\exp(\X\trans\bar\bgamma)+\bPsi\trans\bbeta\}^2 v_{\bar \btheta}(\X)+2\bL\trans \EE_{\Ssc}\X\{\exp(\X\trans\bar\bgamma)+\bPsi\trans\bbeta\}v_{\bar \btheta}(\X)-\blambda\trans\EE_{\Ssc}\X \dot g(\X\trans\bar\balpha)\bPsi\trans\bbeta.
\]
Then we have the partial derivative of $\blambda$ and $\bbeta$:
\begin{equation}
	\EE_{\Ssc}\X \dot g(\X\trans\bar\balpha)\bPsi\trans\bbeta=\bzero,
\label{equ:8.1.1}
\end{equation}
and
\begin{equation}
2\EE_{\Ssc}\bPsi\{\exp(\X\trans\bar\bgamma)+\bPsi\trans\bbeta\}v_{\bar \btheta}(\X)+2 \EE_{\Ssc}\bPsi\X\trans v_{\bar \btheta}(\X)\bL-\EE_{\Ssc}\bPsi\X\trans \dot g(\X\trans\bar\balpha)\blambda=\bzero.
\label{equ:8.2.1}
\end{equation}
From (\ref{equ:8.2.1}) we have
\[
\bbeta=\{2\EE_{\Ssc}\bPsi\bPsi\trans v_{\bar \btheta}(\X)\}^{-1}\{\EE_{\Ssc}\bPsi\X\trans \dot g(\X\trans\bar\balpha)\blambda-2 \EE_{\Ssc}\bPsi \exp(\X\trans\bar\bgamma)v_{\bar \btheta}(\X)-2 \EE_{\Ssc}\bPsi\X\trans v_{\bar \btheta}(\X)\bL\},
\]
together with (\ref{equ:8.1.1}), we have
\begin{align*}
   \EE_{\Ssc}\X \dot g(\X\trans\bar\balpha)\bPsi\trans&\{\EE_{\Ssc}\bPsi\bPsi\trans v_{\bar \btheta}(\X)\}^{-1}
   \\
   &*\{\EE_{\Ssc}\bPsi\X\trans \dot g(\X\trans\bar\balpha)\blambda-2 \EE_{\Ssc}\bPsi \exp(\X\trans\bar\bgamma)v_{\bar \btheta}(\X)-2 \EE_{\Ssc}\bPsi\X\trans v_{\bar \btheta}(\X)\bL\}=\bzero.
\end{align*}
this function can be simplified as
\[
\ba\trans\bSigma^{-1}(\ba\blambda-2\bb)=\bzero,
\]
and we further have 
\[
\blambda=2(\ba\trans\bSigma^{-1}\ba)^{-1}\ba\trans\bSigma^{-1}\bb.
\]
Hence, we have
\[
\bar\bbeta=\bSigma^{-1}\ba(\ba\trans\bSigma^{-1}\ba)^{-1}\ba\trans\bSigma^{-1}\bb-\bSigma^{-1}\bb.
\]
\end{proof}
\begin{lemma} 
Under Assumptions (\ref{asu:1}) and (\ref{asu:4}), we have that $\widehat\bPsi-\bPsi=O_p(n^{-1/2})$.
\label{lem:b2}
\end{lemma}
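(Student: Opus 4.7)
\noindent\textbf{Proof plan for Lemma \ref{lem:b2}.} The plan is to show that both the numerator $\widehat\EE_{\Tsc}[\bPhi v_{\widehat\btheta}(\X)]$ and the denominator $\widehat\EE_{\Tsc}v_{\widehat\btheta}(\X)$ defining $\widehat\bPsi$ converge to their population counterparts at rate $O_p(n^{-1/2})$, and then to pass this rate through the ratio by a standard algebraic identity plus a uniform lower bound on the denominator.

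First I would decompose
\[
\widehat\EE_{\Tsc}[\bPhi v_{\widehat\btheta}(\X)]-\EE_{\Tsc}[\bPhi v_{\bar\btheta}(\X)]
=\bigl(\widehat\EE_{\Tsc}-\EE_{\Tsc}\bigr)[\bPhi v_{\bar\btheta}(\X)]
+\widehat\EE_{\Tsc}\bigl[\bPhi\{v_{\widehat\btheta}(\X)-v_{\bar\btheta}(\X)\}\bigr].
\]
Assumption \ref{asu:1} gives the compact support of $\X$ and $\bPhi$, so $\bPhi v_{\bar\btheta}(\X)$ is bounded and the first term is $O_p(N^{-1/2})=O_p(n^{-1/2})$ by the CLT (recall $n/N=O(1)$). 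For the second term I would apply a first-order Taylor expansion in $\btheta$: by Assumption \ref{asu:4}, $v_{\btheta}(\x)$ is differentiable with bounded partial derivative $\partial_{\btheta}v_{\btheta}(\x)$, so
\[
\bigl|v_{\widehat\btheta}(\x)-v_{\bar\btheta}(\x)\bigr|\le \sup_{\btheta,\x}\bigl\|\partial_{\btheta}v_{\btheta}(\x)\bigr\|\cdot \bigl\|\widehat\btheta-\bar\btheta\bigr\|,
\]
which together with $\widehat\btheta-\bar\btheta=O_p(n^{-1/2})$ from Assumption \ref{asu:4} yields the second term as $O_p(n^{-1/2})$ as well. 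The identical argument applied without the $\bPhi$ factor shows $\widehat\EE_{\Tsc}v_{\widehat\btheta}(\X)-\EE_{\Tsc}v_{\bar\btheta}(\X)=O_p(n^{-1/2})$.

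To conclude, I would use the elementary identity
\[
\frac{\widehat A}{\widehat B}-\frac{A}{B}=\frac{\widehat A - A}{\widehat B}-\frac{A(\widehat B - B)}{B\,\widehat B},
\]
with $\widehat A=\widehat\EE_{\Tsc}[\bPhi v_{\widehat\btheta}(\X)]$, $A=\EE_{\Tsc}[\bPhi v_{\bar\btheta}(\X)]$, and analogously for $B,\widehat B$. Both numerators are $O_p(n^{-1/2})$ by the step above, while $A$ and $B$ are bounded constants. The only remaining ingredient is that $\widehat B$ is bounded away from zero in probability; this follows from $B=\EE_{\Tsc}v_{\bar\btheta}(\X)>0$ (a variance-type quantity, implicit in our working-model setup) and the root-$n$ consistency of $\widehat B$, which together imply $\widehat B\ge B/2$ with probability tending to one. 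Combining these pieces gives $\widehat\bPsi-\bPsi=O_p(n^{-1/2})$.

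The main potential obstacle is the Taylor-expansion step: one must control $v_{\widehat\btheta}(\X)-v_{\bar\btheta}(\X)$ uniformly in $\X$, not merely pointwise, in order to take it outside the empirical average. Assumption \ref{asu:4} (bounded $\partial_{\btheta}v_{\btheta}$) combined with the compact support of $\X$ from Assumption \ref{asu:1} makes this uniform bound straightforward, but it is where the two assumptions are truly used in tandem. Everything else is bookkeeping with the CLT and the quotient identity.
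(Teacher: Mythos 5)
Your proposal is correct and follows essentially the same route as the paper's proof: a first-order expansion of $v_{\widehat\btheta}$ around $\bar\btheta$ using the bounded derivative from Assumption \ref{asu:4} to get root-$n$ rates for the numerator and denominator, followed by the standard quotient algebra. You are slightly more careful than the paper in flagging the uniform-in-$\X$ control of the Taylor remainder and the need for the denominator $\EE_{\Tsc}v_{\bar\btheta}(\X)$ to be bounded away from zero, both of which the paper uses implicitly.
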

\begin{proof}
By definition, we would have that
\[
\widehat\bPsi-\bPsi=\frac{\widehat\EE_{\Tsc}\{\bPhi v_{\widehat\btheta}(\X)\}}{\widehat\EE_{\Tsc}v_{\widehat\btheta}(\X)}-\frac{\EE_{\Tsc}\{\bPhi  v_{\bar \btheta}(\X)\}}{\EE_{\Tsc} v_{\bar \btheta}(\X)}.
\]
Under Assumption (\ref{asu:4}), we have that 
\begin{equation}
    \widehat\EE_{\Tsc}v_{\widehat\btheta}(\X)-\EE_{\Tsc} v_{\bar \btheta}(\X)=\widehat\EE_{\Tsc}v_{\bar\btheta}(\X)+\widehat\EE_{\Tsc}\frac{\partial v_{\btheta}(\X)}{\partial \btheta}|_{\widetilde\btheta}(\widehat\btheta-\bar\btheta)-\EE_{\Tsc} v_{\bar \btheta}(\X)=O_p(n^{-1/2})
    \label{equ:8.1.2}
\end{equation}
for $\widetilde\btheta$ between $\widehat\btheta$ and $\bar\btheta$. By using the same techniques, we have that $\widehat\EE_{\Tsc}\{\bPhi v_{\widehat\btheta}(\X)\}-\EE_{\Tsc}\{\bPhi  v_{\bar \btheta}(\X)\}=O_p(n^{-1/2})$. 
And we have 
\begin{align*}
&\widehat\bPsi-\bPsi=\frac{\widehat\EE_{\Tsc}\{\bPhi v_{\widehat \btheta}(\X)\}\EE_{\Tsc} v_{\bar \btheta}(\X)-\EE_{\Tsc}\{\bPhi  v_{\bar \btheta}(\X)\}\widehat\EE_{\Tsc} v_{\widehat \btheta}(\X)}{\widehat\EE_{\Tsc} v_{\widehat \btheta}(\X)\EE_{\Tsc} v_{\bar \btheta}(\X)}
\\
&=\frac{\widehat\EE_{\Tsc}\{\bPhi  v_{\widehat \btheta}(\X)\}\EE_{\Tsc} v_{\bar \btheta}(\X)-\EE_{\Tsc}\{\bPhi  v_{\bar \btheta}(\X)\}\EE_{\Tsc} v_{\bar \btheta}(\X)-[\EE_{\Tsc}\{\bPhi  v_{\bar \btheta}(\X)\}\widehat\EE_{\Tsc} v_{\widehat \btheta}(\X)-\EE_{\Tsc}\{\bPhi  v_{\bar \btheta}(\X)\}\EE_{\Tsc} v_{\bar \btheta}(\X)]}{\{\EE_{\Tsc} v_{\bar \btheta}(\X)+O_p(n^{-1/2})\}\EE_{\Tsc} v_{\bar \btheta}(\X)}
\\
&=\frac{O_p(n^{-1/2})\EE_{\Tsc} v_{\bar \btheta}(\X)-\EE_{\Tsc}\{\bPhi  v_{\bar \btheta}(\X)\}O_p(n^{-1/2})}{\{\EE_{\Tsc} v_{\bar \btheta}(\X)+O_p(n^{-1/2})\}\EE_{\Tsc} v_{\bar \btheta}(\X)}=O_p(n^{-1/2}).
\end{align*}
\end{proof}
\begin{lemma}
Under Assumptions (\ref{asu:1}) and (\ref{asu:2}), we have that $\widehat\bgamma-\bar\bgamma=O_p(n^{-1/2})$ and $\widehat\balpha-\bar\balpha=O_p(n^{-1/2})$.
\label{lem:b3}
\end{lemma}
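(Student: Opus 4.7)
The plan is to invoke the standard M-estimation (Z-estimator) machinery separately for the two estimating equations in (\ref{equ:2.1.1}). For $\widehat\bgamma$, define
\[
M_n(\bgamma) = \widehat\EE_{\Ssc}\X\exp(\X\trans\bgamma)-\widehat\EE_{\Tsc}\X,
\qquad
M(\bgamma) = \EE_{\Ssc}\X\exp(\X\trans\bgamma)-\EE_{\Tsc}\X,
\]
so that $M_n(\widehat\bgamma)=\bzero$ and $M(\bar\bgamma)=\bzero$. Compactness of $\supp(\X)$ in Assumption \ref{asu:1} and smoothness of $\bgamma\mapsto\X\exp(\X\trans\bgamma)$ make $\{\X\exp(\X\trans\bgamma):\bgamma\}$ a Glivenko--Cantelli class on any compact neighbourhood of $\bar\bgamma$, which yields the uniform convergence $\sup_{\bgamma}\|M_n(\bgamma)-M(\bgamma)\|\to 0$ in probability. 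Combined with the identifiability implicit in the invertibility of the Jacobian $\EE_{\Ssc}\X\X\trans\exp(\X\trans\bar\bgamma)$ (Assumption \ref{asu:3}), standard Z-estimator consistency \citep[e.g.]{van2000asymptotic} gives $\widehat\bgamma\xrightarrow{p}\bar\bgamma$.

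Next I would upgrade consistency to the $n^{-1/2}$ rate via a first-order expansion. Write
\[
\bzero=M_n(\widehat\bgamma)
= M_n(\bar\bgamma) + \bigl[\widehat\EE_{\Ssc}\X\X\trans\exp(\X\trans\widetilde\bgamma)\bigr](\widehat\bgamma-\bar\bgamma),
\]
for some $\widetilde\bgamma$ on the segment between $\widehat\bgamma$ and $\bar\bgamma$, which is permissible because $\exp(\X\trans\bgamma)$ is continuously differentiable in $\bgamma$. The term $M_n(\bar\bgamma)$ is an average of i.i.d.\ bounded random vectors with mean zero (boundedness by compactness of $\supp(\X)$), so by the CLT, $M_n(\bar\bgamma)=O_p(n^{-1/2})$ using $n/N=O(1)$. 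Meanwhile, consistency of $\widehat\bgamma$ together with the uniform law gives $\widehat\EE_{\Ssc}\X\X\trans\exp(\X\trans\widetilde\bgamma)\xrightarrow{p}\EE_{\Ssc}\X\X\trans\exp(\X\trans\bar\bgamma)$, whose inverse is bounded by Assumption \ref{asu:3}. Inverting, $\widehat\bgamma-\bar\bgamma = O_p(n^{-1/2})$.

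The argument for $\widehat\balpha$ is completely analogous, working with $S_n(\balpha)=\widehat\EE_{\Ssc}\X\{Y-g(\X\trans\balpha)\}$ and its population version $S(\balpha)$. Assumption \ref{asu:2} supplies differentiability of $g$ with Lipschitz $\dot g$, so the mean-value expansion
\[
\bzero = S_n(\widehat\balpha) = S_n(\bar\balpha) - \bigl[\widehat\EE_{\Ssc}\X\X\trans\dot g(\X\trans\widetilde\balpha)\bigr](\widehat\balpha-\bar\balpha)
\]
is valid, with Lipschitzness giving continuity of the Jacobian in $\balpha$. The CLT applies to $S_n(\bar\balpha)$ using $\mathbb{E}Y^4<\infty$ together with boundedness of $\X$ (so $\X Y$ has finite second moment and $\X g(\X\trans\bar\balpha)$ is bounded), producing $S_n(\bar\balpha)=O_p(n^{-1/2})$. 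Invertibility of the limiting Jacobian $\EE_{\Ssc}\X\X\trans\dot g(\X\trans\bar\balpha)$ (Assumption \ref{asu:3}) then gives $\widehat\balpha-\bar\balpha=O_p(n^{-1/2})$.

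The only delicate point is the tacit use of Assumption \ref{asu:3} for invertibility of the two Jacobians; the rest is textbook. I would flag this in the write-up and otherwise present the two expansions in parallel, since both are instances of the same Z-estimator argument.
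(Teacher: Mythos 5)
Your proposal is correct and follows essentially the same route as the paper's own proof: a mean-value expansion of each estimating equation around the limiting parameter, a CLT applied to the leading empirical term, and convergence plus invertibility of the sample Jacobian to conclude the $O_p(n^{-1/2})$ rate. Your remark that invertibility of the Jacobians tacitly requires Assumption \ref{asu:3} (not just Assumptions \ref{asu:1}--\ref{asu:2} as stated in the lemma) is a fair observation, since the paper's proof also inverts $\EE_{\Ssc}\X\X\trans\exp(\X\trans\bar\bgamma)$ and $\EE_{\Ssc}\X\X\trans\dot g(\X\trans\bar\balpha)$ without citing it.
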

\begin{proof}
The estimation of $\bgamma$ has been given as
\[
\widehat\EE_{\Ssc}\X\exp(\X\trans\widehat\bgamma)=\widehat\EE_{\Tsc}\X,
\]
by applying Taylor series expansion, we have
\[
n^{-1}\sum_{i=1}^n\X_i\exp(\X_i\trans\bar\bgamma)+n^{-1}\sum_{i=1}^n\X_i\exp(\X_i\trans\widetilde\bgamma)\X_i\trans(\widehat\bgamma-\bar\bgamma)=N^{-1}\sum_{i=n+1}^{n+N}\X_i,
\]
where $\widetilde\bgamma$ is some vector between $\widehat\bgamma$ and $\bar\bgamma$. According to \citep[Chapter 5]{van2000asymptotic}, we have $\widehat\bgamma-\bar\bgamma=o_p(1)$. Let $\bJ$ represent matrix $n^{-1}\sum_{i=1}^n\X_i\exp(\X_i\trans\widetilde\bgamma)\X_i\trans$, and we have that 
\[
\bJ=n^{-1}\sum_{i=1}^n\X_i\exp(\X_i\trans\bar\bgamma)\X_i\trans+n^{-1}\sum_{i=1}^n\X_i\exp(\X_i\trans\bgamma^*)\X_i\trans\X_i(\widetilde\bgamma-\bar\bgamma)=\EE_{\Ssc}\X\exp(\X\trans\bar\bgamma)\X\trans+o_p(1)
\]
for $\bgamma^*$ between $\widetilde\bgamma$ and $\bar\bgamma$. Hence, by central limit theorem and Slutsky theorem, we have that, 
\begin{align*}
	&\widehat\bgamma-\bar\bgamma=\bJ^{-1}\bigg\{N^{-1}\sum_{i=n+1}^{n+N}\X_i-n^{-1}\sum_{i=1}^n\X_i\exp(\X_i\trans\bar\bgamma)\bigg\}\\
	=&\bJ^{-1}\bigg\{N^{-1}\sum_{i=n+1}^{n+N}\X_i-\EE_{\Tsc}\X+\EE_{\Ssc}\X\exp(\X\trans\bar\bgamma)-n^{-1}\sum_{i=1}^n\X_i\exp(\X_i\trans\bar\bgamma)\bigg\}=O_p(n^{-1/2}).
\end{align*}
Furthermore, 
The estimation equation of $\widehat\alpha$ is given by
\[
\widehat\EE_{\Ssc}\bS(\widehat\balpha)=\widehat\EE_{\Ssc}\X\{Y-g(\X\trans\widehat\balpha)\}=\bzero,
\]
by using Taylor series expansion, we have that \[
\widehat\EE_{\Ssc}\X\{Y-g(\X\trans\bar\balpha)\}+\widehat\EE_{\Ssc}\frac{\partial \bS(\balpha)}{\partial \balpha\trans}\bigg|_{\widetilde\alpha}(\widehat\balpha-\bar\balpha)=\bzero
\]
for $\widetilde\alpha$ between $\widehat\balpha$ and $\bar\balpha$, and we have
\[
\widehat\balpha-\bar\balpha=-\widehat\EE_{\Ssc}\bigg\{\frac{\partial \bS(\balpha)}{\partial \balpha\trans}\bigg|_{\widetilde\balpha}\bigg\}^{-1}\widehat\EE_{\Ssc}\X\{Y-g(\X\trans\bar\balpha)\}.
\]
By using the same techniques as those for obtaining the asymptotic properties of $\widehat\bgamma$, under Assumptions (\ref{asu:1}) and (\ref{asu:2}), we have $\widehat\balpha-\bar\balpha=O_p(n^{-1/2})$.
\end{proof}
\begin{lemma}
Under Assumptions (\ref{asu:1})-(\ref{asu:4}) and Lemma (\ref{lem:b1})-(\ref{lem:b3}), we can obtain that $\widehat\bbeta-\bar\bbeta=O_p(n^{-1/2})$. In addition, when the PS is correctly specified, we further have $\bar\bbeta=\bzero$ and $\widehat\bbeta=O_p(n^{-1/2})$.
\label{lem:b4}
\end{lemma}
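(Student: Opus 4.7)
The plan is to leverage the closed-form expression for $\bar\bbeta$ from Lemma \ref{lem:b1}, derive the analogous closed form for $\widehat\bbeta$ by solving the empirical Lagrangian, and then pass from the empirical to the population quantities via the continuous mapping theorem combined with root-$n$ convergence of the constituent matrices and vectors.

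First, I would apply the exact same Lagrangian argument used to prove Lemma \ref{lem:b1} to the empirical RWLS problem (\ref{equ:rwls}), yielding
\[
\widehat\bbeta = \widehat\bSigma^{-1}\widehat\ba\bigl(\widehat\ba\trans\widehat\bSigma^{-1}\widehat\ba\bigr)^{-1}\widehat\ba\trans\widehat\bSigma^{-1}\widehat\bb - \widehat\bSigma^{-1}\widehat\bb,
\]
where $\widehat\ba$, $\widehat\bb$, $\widehat\bSigma$ are the obvious empirical analogues of $\ba$, $\bb$, $\bSigma$ (with $\widehat\EE_{\Ssc}$ replacing $\EE_{\Ssc}$, $\widehat\balpha,\widehat\bgamma,\widehat\btheta$ replacing $\bar\balpha,\bar\bgamma,\bar\btheta$, $\widehat\bPsi$ replacing $\bPsi$, and $\widehat\bL$ replacing $\bL$). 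Next, I would show that each of $\widehat\ba - \ba$, $\widehat\bb - \bb$, $\widehat\bSigma - \bSigma$, and $\widehat\bL - \bL$ is $O_p(n^{-1/2})$. These bounds follow by combining the central limit theorem applied to empirical averages of bounded functionals (Assumption \ref{asu:1}), the Lipschitz property of $\dot g$ and the smoothness of $\exp(\cdot)$ on compact support (Assumption \ref{asu:2}), the $O_p(n^{-1/2})$ rates for $\widehat\balpha,\widehat\bgamma$ from Lemma \ref{lem:b3}, the rate for $\widehat\bPsi$ from Lemma \ref{lem:b2}, and differentiability of $v_{\btheta}$ with $\widehat\btheta - \bar\btheta = O_p(n^{-1/2})$ from Assumption \ref{asu:4}. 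Assumption \ref{asu:3} ensures that $\bSigma$ and $\ba\trans\bSigma^{-1}\ba$ have eigenvalues bounded away from zero, so by perturbation of matrix inverses, $\widehat\bSigma$ and $\widehat\ba\trans\widehat\bSigma^{-1}\widehat\ba$ are invertible with probability tending to one and their inverses converge at rate $O_p(n^{-1/2})$. The continuous mapping theorem then gives $\widehat\bbeta - \bar\bbeta = O_p(n^{-1/2})$, completing the first claim.

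For the second claim, suppose the PS model is correctly specified, so $\exp(\x\trans\bar\bgamma) = r_0(\x) = p_{\Tsc}(\x)/p_{\Ssc}(\x)$. Then
\[
\EE_{\Ssc}\bPsi\exp(\X\trans\bar\bgamma)v_{\bar\btheta}(\X) = \EE_{\Tsc}\bPsi v_{\bar\btheta}(\X) = \bzero,
\]
where the last equality follows by construction of $\bPsi = \bPhi - \EE_{\Tsc}[\bPhi v_{\bar\btheta}(\X)]/\EE_{\Tsc}v_{\bar\btheta}(\X)$ in Step 2 of Algorithm \ref{alg:1}. Moreover, as observed in Section \ref{sec:method:expand}, correctness of the PS model forces $\bL = \bzero$, because then $\EE_{\Ssc}\X\dot g(\X\trans\bar\balpha)\bar r(\X) = \EE_{\Tsc}\X\dot g(\X\trans\bar\balpha)$. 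Combining these two facts gives $\bb = \bzero$, and hence $\bar\bbeta = \bzero$ by the formula in Lemma \ref{lem:b1}. Substituting $\bar\bbeta = \bzero$ into the first claim immediately yields $\widehat\bbeta = O_p(n^{-1/2})$.

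The main obstacle will be a careful bookkeeping argument in the perturbation analysis: $\widehat\bL$ itself depends on $\widehat\balpha,\widehat\bgamma,\widehat\btheta$ through a ratio of matrices, and $\widehat\bPsi$ additionally involves $\widehat\btheta$, so verifying the $O_p(n^{-1/2})$ rate for $\widehat\bb$ requires chaining several Taylor expansions together with Slutsky-type arguments, while uniformly controlling the smallest eigenvalue of $\widehat\bSigma$ and $\widehat\ba\trans\widehat\bSigma^{-1}\widehat\ba$. None of these steps is difficult in isolation; the challenge is purely organizational.
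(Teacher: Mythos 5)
Your proposal is correct and follows essentially the same route as the paper's proof: it uses the closed form from Lemma \ref{lem:b1} (applied to the empirical RWLS problem), establishes $O_p(n^{-1/2})$ rates for $\widehat\ba$, $\widehat\bb$, $\widehat\bSigma$ via Lemmas \ref{lem:b2}--\ref{lem:b3} and Assumption \ref{asu:4}, handles the inverses by matrix perturbation under Assumption \ref{asu:3}, and for the correct-PS case deduces $\bb=\bzero$ from $\bL=\bzero$ together with the mean-shift construction of $\bPsi$, hence $\bar\bbeta=\bzero$. If anything, your write-up is slightly more explicit than the paper in stating the empirical closed form for $\widehat\bbeta$ and in flagging the chained dependence of $\widehat\bL$ and $\widehat\bPsi$ on the nuisance estimators, but the substance is identical.
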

\begin{proof}
By using the same techniques as (\ref{equ:8.1.2}), under Condition 2-4, we first have that
\[
\widehat\ba-\ba=\widehat\EE_{\Ssc}\widehat\bPsi\X\trans \dot g(\X\trans\widehat\balpha)-\EE_{\Ssc}\bPsi\X\trans \dot g(\X\trans\bar\balpha)=\widehat\EE_{\Ssc}\bPsi\X\trans \dot g(\X\trans\bar\balpha)-\EE_{\Ssc}\bPsi\X\trans \dot g(\X\trans\bar\balpha)+O_p(n^{-1/2})=O_p(n^{-1/2}).
\]
In addition, we can have that $\widehat\bb-\bb=O_p(n^{-1/2})$ and $\bSigmahat-\bSigma=O_p(n^{-1/2})$. Furthermore, we can easily have that 
\begin{align*}
	\bSigmahat^{-1}-\bSigma^{-1}&=\bSigma^{-1}\bSigma\{\bSigma+O_p(n^{-1/2})\}^{-1}-\bSigma^{-1}
	\\&=\bSigma^{-1}[\bSigma\{\bSigma+O_p(n^{-1/2})\}^{-1}-\{\bSigma+O_p(n^{-1/2})\}\{\bSigma+O_p(n^{-1/2})\}^{-1}]=O_p(n^{-1/2}),
\end{align*}
based on which we can have $(\widehat\ba\trans\bSigmahat^{-1}\widehat\ba)^{-1}-(\ba\trans\bSigma^{-1}\ba)^{-1}=O_p(n^{-1/2})$. Let $\widehat\bOmega$ denote $\bSigmahat^{-1}\widehat\ba(\widehat\ba\trans\bSigmahat^{-1}\widehat\ba)^{-1}\widehat\ba\trans\bSigmahat^{-1}$ and $\bOmega$ denote $\bSigma^{-1}\ba(\ba\trans\bSigma^{-1}\ba)^{-1}\ba\trans\bSigma^{-1}$. We can have that $\widehat\bOmega-\bOmega=O_p(n^{-1/2})$, hence, we have that $\widehat\bbeta-\bar\bbeta=\widehat\bOmega\widehat\bb-\bOmega\bb=O_p(n^{-1/2}).$

On the other hand, when the PS is correctly specified, $\bL=\bzero$ and $\EE_{\Ssc}\bPsi\exp(\X\trans\bar\bgamma)v_{\bar \btheta}(\X)=\EE_{\Tsc}\bPsi v_{\bar \btheta}(\X)=\bzero$, which means 
\[
\bar\bbeta=\bOmega\bb=\bOmega\{\EE_{\Ssc}\bPsi \exp(\X\trans\bar\bgamma)v_{\bar \btheta}(\X)+\EE_{\Ssc}\bPsi\X\trans v_{\bar \btheta}(\X)\bL\}=\bOmega\bzero=\bzero.
\]
And at the same time, we have $\widehat\bbeta=O_p(n^{-1/2})$.
\end{proof}

\subsection{Proof of Theorem \ref{thm:1}}

\begin{proof}
Proof of Theorem 1 (i).

When the OR is correctly specified, $\bar\balpha=\balpha_0$. Consider $\widetilde \mu\subOR$ where
\begin{align*}
   &\widetilde\mu\subOR=\widehat\EE_{\Ssc}\{Y-g(\X\trans\bar\balpha)\}\{\exp(\X\trans\bar\bgamma)+\bPsi\trans\bar\bbeta\}+\widehat\EE_{\Tsc}g(\X\trans\bar\balpha)
   \\
   &+\{\EE_{\Ssc}\X\trans \dot g(\X\trans\bar\balpha)\exp(\X\trans\bar\bgamma)-\EE_{\Tsc}\X\trans \dot g(\X\trans\bar\balpha)\}\EE_{\Ssc}\bigg\{\frac{\partial \bS(\balpha)}{\partial \balpha\trans}\bigg|_{\bar\balpha}\bigg\}^{-1}\widehat\EE_{\Ssc}\X\{Y-g(\X\trans\bar\balpha)\}.
\end{align*}
It is obvious that $\EE\widetilde\mu\subOR=\EE_{\Tsc}g(\X\trans\bar\balpha)=\mu_0$. Hence, by using central limit theorem, we have that $\widetilde\mu\subOR-\mu_0=O_p(n^{-1/2})$, $n^{1/2}(\widetilde\mu\subOR-\mu_0)$ weakly converges to gaussian distribution with mean $\bzero$. On the other hand, we have that 
 \begin{align*}
     &\widehat\mu\subPAD-\widetilde\mu\subOR=\widehat\EE_{\Ssc}\{Y-g(\X\trans\balpha_0)\}\{\exp(\X\trans\bar\bgamma)\X\trans(\widehat\bgamma-\bar\bgamma)+\bPsi\trans(\widehat\bbeta-\bar\bbeta)+(\widehat\bPsi-\bPsi)\trans\bar\bbeta\}
     \\
     &-[\widehat\EE_{\Ssc}\X\trans \dot g(\X\trans\balpha_0)\{\exp(\X\trans\bar\bgamma)+\bPsi\trans\bar\bbeta\}-\widehat\EE_{\Tsc}\X\trans \dot g(\X\trans\balpha_0)](\widehat\balpha-\bar\balpha)+o_p(n^{-1/2})
     \\
     &-\{\EE_{\Ssc}\X\trans \dot g(\X\trans\balpha_0)\exp(\X\trans\bar\bgamma)-\EE_{\Tsc}\X\trans \dot g(\X\trans\balpha_0)\}\EE_{\Ssc}\bigg\{\frac{\partial \bS(\balpha)}{\partial \balpha\trans}\bigg|_{\balpha_0}\bigg\}^{-1}\widehat\EE_{\Ssc}\X\{Y-g(\X\trans\balpha_0)\},
 \end{align*}
 by using central limit theorem, along with Lemma (\ref{lem:b1})-(\ref{lem:b4}), we have that 
\begin{align*}
    &\widehat\EE_{\Ssc}\{Y-g(\X\trans\balpha_0)\}\{\exp(\X\trans\bar\bgamma)\X\trans(\widehat\bgamma-\bar\bgamma)+\bPsi\trans(\widehat\bbeta-\bar\bbeta)+(\widehat\bPsi-\bPsi)\trans\bar\bbeta\}
    \\
    &=[\widehat\EE_{\Ssc}\{Y-g(\X\trans\balpha_0)\}\exp(\X\trans\bar\bgamma)\X\trans](\widehat\bgamma-\bar\bgamma)+[\widehat\EE_{\Ssc}\{Y-g(\X\trans\balpha_0)\}\bPsi\trans](\widehat\bbeta-\bar\bbeta)
    \\&+[\widehat\EE_{\Ssc}\{Y-g(\X\trans\balpha_0)\}\bar\bbeta\trans](\widehat\bPsi-\bPsi)=O_p(n^{-1/2})o_p(1)+O_p(n^{-1/2})o_p(1)+O_p(n^{-1/2})o_p(1)=o_p(n^{-1/2}).
\end{align*}
 On the other hand,
 \begin{equation}
\begin{aligned}
     &- [\widehat\EE_{\Ssc}\X\trans \dot g(\X\trans\balpha_0)\{\exp(\X\trans\bar\bgamma)+\bPsi\trans\bar\bbeta\}-\widehat\EE_{\Tsc}\X\trans \dot g(\X\trans\balpha_0)](\widehat\balpha-\bar\balpha)\\
    &=[\widehat\EE_{\Ssc}\X\trans \dot g(\X\trans\balpha_0)\{\exp(\X\trans\bar\bgamma)+\bPsi\trans\bar\bbeta\}-\widehat\EE_{\Tsc}\X\trans \dot g(\X\trans\balpha_0)]\widehat\EE_{\Ssc}\bigg\{\frac{\partial \bS(\balpha)}{\partial \balpha\trans}\bigg|_{\bar\balpha}\bigg\}^{-1}\widehat\EE_{\Ssc}\X\{Y-g(\X\trans\bar\balpha)\}\\
    &=\{\EE_{\Ssc}\X\trans \dot g(\X\trans\balpha_0)\exp(\X\trans\bar\bgamma)-\EE_{\Tsc}\X\trans \dot g(\X\trans\balpha_0)+O_p(n^{-1/2})\}
    \\
    &~~~~~~~~~~~~~~~~~~~~~~~~~~~~~~~~~~~~~~~~~~~~~~~*\bigg[\EE_{\Ssc}\bigg\{\frac{\partial \bS(\balpha)}{\partial \balpha\trans}\bigg|_{\balpha_0}\bigg\}^{-1}+O_p(n^{-1/2})\bigg]\widehat\EE_{\Ssc}\X\{Y-g(\X\trans\balpha_0)\}.
    \label{equ:8.1.3}
    \end{aligned}
 \end{equation}
 Hence, we have that
 \begin{align*}
     &-[\widehat\EE_{\Ssc}\X\trans \dot g(\X\trans\balpha_0)\{\exp(\X\trans\bar\bgamma)+\bPsi\trans\bar\bbeta\}-\widehat\EE_{\Tsc}\X\trans \dot g(\X\trans\balpha_0)](\widehat\balpha-\bar\balpha)
     \\
     &-\{\EE_{\Ssc}\X\trans \dot g(\X\trans\balpha_0)\exp(\X\trans\bar\bgamma)-\EE_{\Tsc}\X\trans \dot g(\X\trans\balpha_0)\}\EE_{\Ssc}\bigg\{\frac{\partial \bS(\balpha)}{\partial \balpha\trans}\bigg|_{\balpha_0}\bigg\}^{-1}\widehat\EE_{\Ssc}\X\{Y-g(\X\trans\balpha_0)\}
     \\
     &=\widehat\EE_{\Ssc}\X\{Y-g(\X\trans\balpha_0)\}O_p(n^{-1/2})=o_p(n^{-1/2}).
 \end{align*}
 Thus, from previous results, we have that $\widehat\mu\subPAD-\widetilde\mu\subOR=o_p(n^{-1/2})$. Together with Slutsky theorem, we futher have that $\widehat\mu\subPAD-\mu_0=O_p(n^{-1/2})$ and $n^{1/2}(\muhat\subPAD-\mu_0)$ weakly converges to gaussian distribution with mean $\bzero$.
 
 When the PS is correctly specified, $\bar\bgamma=\bgamma_0$, we consider $\widetilde\mu\subPS$ where 
 \begin{align*}
   &\widetilde\mu\subPS=\widehat\EE_{\Ssc}\{Y-g(\X\trans\bar\balpha)\}\{\exp(\X\trans\bgamma_0)+\bPsi\trans\bar\bbeta\}
   +\widehat\EE_{\Tsc}g(\X\trans\bar\balpha)
\\&+\EE_{\Ssc}\{Y-g(\X\trans\bar\balpha)\}\exp(\X\trans\bar\bgamma)\X\trans\{\EE_{\Ssc}\X\exp(\X\trans\bgamma_0)\X\trans\}^{-1}\{\widehat\EE_{\Tsc}\X-\widehat\EE_{\Ssc}\X\exp(\X\trans\bgamma_0)\}.
\end{align*}
Together with the results from Lemma (\ref{lem:b4}), we have that $\EE\widetilde\mu\subPS=\EE_{\Ssc}Y\exp(\X\trans\bar\bgamma)=\EE_{\Tsc}Y=\mu_0$. By using the central limit theorem, we have that $\widetilde\mu\subPS-\mu_0=O_p(n^{-1/2})$, $n^{1/2}(\widetilde\mu\subPS-\mu_0)$ weakly converges to gaussian distribution with mean $\bzero$. On the other hand, we have that
\begin{align*}
     &\widehat\mu\subPAD-\widetilde\mu\subPS=\widehat\EE_{\Ssc}\{Y-g(\X\trans\bar\balpha)\}\{\exp(\X\trans\bar\bgamma)\X\trans(\widehat\bgamma-\bar\bgamma)+\bPsi\trans(\widehat\bbeta-\bar\bbeta)\}
  \\&
-\EE_{\Ssc}\{Y-g(\X\trans\bar\balpha)\}\exp(\X\trans\bar\bgamma)\X\trans\{\EE_{\Ssc}\X\exp(\X\trans\bgamma_0)\X\trans\}^{-1}\{\widehat\EE_{\Tsc}\X-\widehat\EE_{\Ssc}\X\exp(\X\trans\bgamma_0)\}+o_p(n^{-1/2})
\end{align*}
By using the techniques from (\ref{equ:8.1.3}), we would have 
\begin{align*}
    &\widehat\EE_{\Ssc}\{Y-g(\X\trans\bar\balpha)\}\exp(\X\trans\bar\bgamma)\X\trans(\widehat\bgamma-\bar\bgamma)
\\&-\EE_{\Ssc}\{Y-g(\X\trans\bar\balpha)\}\exp(\X\trans\bar\bgamma)\X\trans\{\EE_{\Ssc}\X\exp(\X\trans\bgamma_0)\X\trans\}^{-1}\{\widehat\EE_{\Tsc}\X-\widehat\EE_{\Ssc}\X\exp(\X\trans\bgamma_0)\}=o_p(n^{-1/2})
\end{align*}
And from Lemma A4, we have $\widehat\bbeta=O_p(n^{-1/2})$. Thus, we have $\widehat\mu\subPAD-\mu_0=O_p(n^{-1/2})$. On the other hand, it is worth noticing that $\widehat\bbeta$ is the continuous function of $\widehat\btheta$, $\widehat\bgamma$ and $\widehat\balpha$, so under central limit theorem and Slutsky theorem, we would have the asymptotic normality of $\widehat\bbeta$. Hence, we further have that $n^{1/2}(\widehat\mu\subPS-\mu_0)$ weakly converges to gaussian distribution with mean $\bzero$. 
\end{proof}
\begin{proof}
Proof of Theorem 1 (ii).

First we denote $\bU$ as
\[
\bU=\Var_{\Tsc}(\EE(Y|\X))+\bL\trans\EE_{\Ssc}\X\X\trans\Var(Y|\X)\bL
\]
When the OR is correctly specified, the asymptotic variance of $\widehat\mu\subPAD$, $\Var\{n^{-1/2}(\widehat\mu\subPAD-\mu_0)\}$ is 
\[
\EE_{\Ssc}\{\exp(\X\trans\bar\bgamma)+\bPsi\trans\bar\bbeta\}^2 v_{\bar\btheta}(\X)+2\bL\trans \EE_{\Ssc}\X\{\exp(\X\trans\bar\bgamma)+\bPsi\trans\bar\bbeta\}v_{\bar\btheta}(\X)+\bU,
\]
and $\bar\bbeta$ contributes to minimizing this variance. When $\bar\bbeta=\bzero$, the function above is written as
\[
\EE_{\Ssc}\{\exp(\X\trans\bar\bgamma)\}^2 v_{\bar\btheta}(\X)+2\bL\trans \EE_{\Ssc}\X\{\exp(\X\trans\bar\bgamma)\}v_{\bar\btheta}(\X)+\bU,
\]
which is the same as the asymptotic variance of $\widehat\mu\subDR$, $\Var\{n^{-1/2}(\widehat\mu\subDR-\mu_0)\}$. Hence, when $\bar\bbeta\neq\bzero$, $\widehat\mu\subPAD$ has the smaller asymptotic variance than standard doubly robust estimator $\widehat\mu\subDR$.
\end{proof}

\begin{proof}
Proof of Theorem 1 (iii).

When both the PS and OR is correctly specified, consider $\widetilde\mu_{B}$, where
\[
\widetilde\mu_{B}=\widehat\EE_{\Ssc}\{Y-g(\X\trans\balpha_0)\}\exp(\X\trans\bgamma_0)+\widehat\EE_{\Tsc}g(\X\trans\balpha_0).
\]
By using central limit theorem, $n^{1/2}(\widetilde\mu_{B}-\mu_0)$ weakly converges to gaussian distribution with mean $\bzero$. On the other hand, by using Taylor series expansion, we would have $\widehat\mu\subPAD-\widetilde\mu_{B}=o_p(n^{-1/2})$ and $\widehat\mu\subDR-\widetilde\mu_{B}=o_p(n^{-1/2})$. Hence, they have the same asymptotic variance. 
\end{proof}

\end{document}